\documentclass[runningheads]{llncs}
\usepackage[T1]{fontenc}
\usepackage{graphicx}

\usepackage{amsmath,amssymb}

\usepackage{algorithm2e}
\usepackage{booktabs}
\usepackage{listings}
\usepackage{tikz}
\usepackage{microtype}
\usepackage{xurl}
\usepackage[hidelinks]{hyperref}

\usetikzlibrary{arrows.meta,patterns,positioning,matrix,decorations.pathreplacing}

\newcommand{\prebibadjust}{0pt}
\newcommand{\quotes}[1]{``#1''}
\begin{document}

\title{The Inverse Lyndon Array:\texorpdfstring{\\}{ } Definition, Properties, and Linear-Time Construction}
\titlerunning{The Inverse Lyndon Array}

\author{Pietro Negri\inst{1} \and Manuel Sica\inst{1} \and Rocco Zaccagnino\inst{1} \and Rosalba Zizza\inst{1}}
\authorrunning{P. Negri et al.}

\institute{Department of Computer Science, University of Salerno, Fisciano (SA), Italy\\
\email{\{masica, rzaccagnino, rzizza\}@unisa.it, p.negri137@gmail.com}}
\maketitle

\begin{abstract}
The Lyndon array stores, at each position of a word, the length of the longest maximal Lyndon subword starting at that position, and plays an important role in combinatorics on words, for example in the construction of fundamental data structures such as the suffix array. 

In this paper, we introduce the \textit{Inverse Lyndon Array}, the analogous structure for inverse Lyndon words, namely words that are lexicographically greater than all their proper suffixes. Unlike standard Lyndon words, inverse Lyndon words may have non-trivial borders, which introduces a genuine theoretical difficulty. We show that the inverse Lyndon array can be characterized in terms of the next greater suffix array together with a border-correction term, and prove that this correction coincides with a longest common extension (LCE) value. Building on this characterization, we adapt the nearest-suffix framework underlying Ellert’s linear-time construction of the Lyndon array to the inverse setting, obtaining an $O(n)$-time algorithm for general ordered alphabets. Finally, we discuss implications for suffix comparison and report experiments on random, structured, and real datasets showing that the inverse construction exhibits the same practical linear-time behavior as the standard one.
\keywords{Inverse Lyndon words \and Lyndon array \and Combinatorics on words \and word algorithms \and Suffix array}
\end{abstract}

\section{Introduction}

Lyndon words, introduced by Lyndon~\cite{lyndon-words} and extensively studied in combinatorics on words, are primitive words that are lexicographically minimal among their rotations. Equivalently, a non-empty word $w$ is a \emph{Lyndon word} if $w \prec s$ for every non-empty proper suffix $s$ of $w$, where $\prec$ denotes the lexicographic order induced by a total order on the alphabet~$\Sigma$.

The \emph{Lyndon array} $\lambda[1..n]$ of a word $x[1..n]$ gives, at each position $i$, the length of the longest Lyndon factor of $x$ starting at~$i$. Since Bannai et al.~\cite{runs-theorem} showed that $\lambda$ is sufficient to compute all maximal repetitions in linear time, efficient construction algorithms for $\lambda$ have received considerable attention. 
Franek et al.~\cite{algorithms-lyndon-array-original,algorithms-lyndon-array-revisited} related $\lambda$ to a \textit{next-smaller-value} computation on the inverse suffix array, while Ellert~\cite{lyndon-simple} later gave the first direct linear-time algorithm on general ordered alphabets, based on nearest smaller suffix edges and an LCE acceleration mechanism inspired by Manacher's algorithm~\cite{Manacher1975ANL}. Subsequent work considered space-efficient variants, sub-linear algorithms, and links with the Lyndon forest~\cite{space-efficient,sublinear-time,badkobeh_et_al:LIPIcs.CPM.2022.13}.

A closely related notion is that of \emph{inverse Lyndon words}, i.e., a non-empty word $w$ is inverse Lyndon if $s \prec w$ for every non-empty proper suffix~$s$~\cite{inverse-lyndon}, introduced in connection with the \emph{Inverse Canonical Lyndon Factorization} (ICFL), a variant of CFL that may yield more balanced factors and therefore well suited to bioinformatics applications~\cite{lyndon-bio-example}. A border property for the canonical inverse Lyndon factorization was introduced in~\cite{tcs2021} and later clarified in~\cite{inverse-lyndon-2}.
From a combinatorial viewpoint, the inverse setting is not a merely formal dualization of the standard one. The crucial asymmetry is that standard Lyndon words are unbordered, whereas inverse Lyndon words may have non-trivial borders. As a consequence, the classical identity $\lambda[i]=\mathit{next}[i]-i$ no longer survives unchanged. The main difficulty is therefore not merely to restate Ellert's framework with the inequalities reversed, but rather to determine precisely how borders interact with nearest greater suffixes and to show that this interaction preserves the amortized linearity of the underlying LCE machinery.
The \quotes{border correction} is structural: without identifying it as an LCE value on a nearest-greater-suffix edge, one would need separate border computation, breaking the direct linear-time reduction. Our result also connects two nearby lines of work, namely non-crossing LCE queries on general ordered alphabets~\cite{noncrossing-lce,ellert-runs-general} and the border-based view of inverse Lyndon words and ICFL~\cite{inverse-lyndon-2}. In the inverse Lyndon array setting, we show that the needed border correction is recovered directly from the LCE on the corresponding NGS edge.

\paragraph{Contributions.}
We introduce the \emph{Inverse Lyndon Array}~$\lambda^{-1}[1..n]$, which records at each position $i$ the length of the longest maximal inverse Lyndon subword starting at~$i$. The main contributions of this work are as follows:
\begin{itemize}
\item an exact characterization of $\lambda^{-1}$ through the \emph{next greater suffix} array, with a border correction term (see Lemma~\ref{lem:equiv-inv});
\item the identification of the border correction with an LCE value on a nearest-greater suffix edge, which removes the need for any explicit border computation and distinguishes our approach from border-explicit treatments in the inverse Lyndon literature (see Lemma~\ref{lem:border-lce});
\item the adaptation of the non-crossing property, chain iteration, and LCE acceleration from the classical NSS (Next Smaller Suffix)/PSS (Previous Smaller Suffix) setting to the novel NGS (Next Greater Suffix)/PGS (Previous Greater Suffix) setting, yielding a linear-time algorithm for constructing the inverse Lyndon array over general ordered alphabets;
\item a study of suffix-sorting implications, showing that the compatibility property underlying the standard Lyndon array has no direct inverse analogue, while the joint use of $\lambda$ and $\lambda^{-1}$ still yields constant-time rules for several suffix comparisons.
\end{itemize}
The implementation and benchmark scripts are available online\footnote{\url{https://github.com/FLaTNNBio/inverselyndonarray}}.

\section{Preliminaries}\label{sec:prelim}

\paragraph{Words and lexicographic orders.}
A \emph{word} $x = x[1]x[2]\cdots x[n]$ is a sequence of symbols from a totally ordered alphabet $(\Sigma, <)$. We write $|x| = n$ for its length, $\varepsilon$ for the empty word, and $\Sigma^*$ and $\Sigma^+$ for the sets of all words and all non-empty words, respectively. For $1 \le i \le j \le n$, the subword $x[i..j]$ is the sequence $x[i]x[i{+}1]\cdots x[j]$; if $i > j$, it equals $\varepsilon$. A subword $x[1..j]$ is a \emph{prefix}, and $x_i = x[i..n]$ denotes the suffix starting at position~$i$.

The \emph{lexicographic order} $\prec$ on $\Sigma^*$ is defined as follows: $x \prec y$ if and only if either $y = xw$ for some non-empty $w$, or $x = urv$ and $y = usw$ with $r < s$. The \emph{Longest Common Extension} of two suffixes is:
\[
\mathrm{lce}(i,j) = \max\{|u| \mid x_i = uv,\ x_j = uw\}.
\]

We frame every word with sentinels, i.e.,  $x = \#\,x(1..n)\,\$$, where $\#$ and $\$$ satisfy $\# < \$ < a$ for all $a \in \Sigma$ in the standard setting, and $\# > \$ > a$ for all $a \in \Sigma$ in the inverse setting. In the standard case, placing $\#$ strictly below every alphabet symbol ensures that $x_1$ is the lexicographic minimum among all suffixes of the framed word, so all suffix ranks are distinct and every NSS/PSS edge is unambiguous. In the inverse case, the reversed order makes $x_1$ the lexicographic maximum, and guarantees a strict total order on all suffixes with no ties.

A non-empty word $w$ is a \emph{Lyndon word} if $w \prec s$ for every non-empty proper suffix $s$ of~$w$~\cite{PIERREDUVAL1983363}. A subword $x[i..j]$ is a \emph{maximal Lyndon subword} of $x$ starting at~$i$ if it is a Lyndon word and either $j = n$ or $x[i..j{+}1]$ is not Lyndon.

\begin{definition}[Lyndon Array]\label{def:lyn-arr}
The \emph{Lyndon array} $\lambda[1..n]$ of a word $x[1..n]$ is defined by
\[
\lambda[i] = \max\{m \in [1,n-i+1] \mid x[i..i+m-1] \text{ is a Lyndon word}\}.
\]
\end{definition}

The suffix array $\mathit{SA}[1..n]$ is the permutation such that
\[
x_{\mathit{SA}[1]} \prec x_{\mathit{SA}[2]} \prec \cdots \prec x_{\mathit{SA}[n]}.
\]
The inverse suffix array $\mathit{ISA}$ satisfies $\mathit{ISA}[\mathit{SA}[i]] = i$, so $\mathit{ISA}[j]$ is the lexicographic rank of $x_j$. Franek et al.~\cite{algorithms-lyndon-array-original} proved the following characterization.

\begin{theorem}[\cite{algorithms-lyndon-array-original}]\label{thm:fulcro}
$x[i..j]$ is maximal Lyndon subword if and only if:
\begin{enumerate}
\item $\mathit{ISA}[i] < \mathit{ISA}[h]$ for all $h \in [i{+}1,j]$, and
\item either $j = n$ or $\mathit{ISA}[j{+}1] < \mathit{ISA}[i]$.
\end{enumerate}
\end{theorem}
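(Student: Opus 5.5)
We read ``maximal Lyndon subword'' starting at $i$ as the \emph{longest} Lyndon factor starting at $i$, i.e.\ $j=i+\lambda[i]-1$. This is the meaning used for the Lyndon array in Definition~\ref{def:lyn-arr}. The purely local reading (``$x[i..j+1]$ is not Lyndon'') is too weak. For $x=ababb$, the factor $x[1..2]=ab$ is Lyndon and $aba$ is not, yet $x_3=abb\succ x_1$, and the longest Lyndon factor at $1$ is $ababb$.

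Let $\mathrm{nss}(i)$ be the least $h>i$ with $\mathit{ISA}[h]<\mathit{ISA}[i]$, and $\mathrm{nss}(i)=n+1$ if no such $h$ exists. The two conditions of Theorem~\ref{thm:fulcro} together say exactly that $j=\mathrm{nss}(i)-1$: condition~1 gives $j<\mathrm{nss}(i)$, and condition~2 gives $j+1=\mathrm{nss}(i)$ or $j=n$. So the plan is to prove the single claim that $x[i..\mathrm{nss}(i)-1]$ is the longest Lyndon factor starting at $i$. This needs two steps, with $j=\mathrm{nss}(i)-1$ and $s=x_{j+1}$, so that $s\prec x_i$ when $j<n$.

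\emph{Step 1: $w=x[i..j]$ is Lyndon.} Take $h\in(i,j]$; we know $x_h\succ x_i$, and must show $x[h..j]\succ w$.
\begin{itemize}
\item If $x_h$ and $x_i$ first differ within the first $j-h+1$ symbols, that mismatch already shows $x[h..j]\succ w$.
\item Otherwise $u=x[h..j]$ is a prefix of $x_i$, with $L=|u|$, so $x_h=u\,s$ and $x_i=u\,x_{i+L}$. Then $x_h\succ x_i$ forces $s\succ x_{i+L}$. But $i<i+L\le j<\mathrm{nss}(i)$, so $x_{i+L}\succ x_i\succ s$, a contradiction.
\item When $j=n$ the second case is impossible anyway: the suffix $u$ would be a proper prefix of $x_i$, hence $x_h\prec x_i$.
\end{itemize}
The sentinel framing guarantees that all suffix comparisons are strict.

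\emph{Step 2: no longer factor is Lyndon.} Let $k>j$, which requires $j<n$. The factor $x[i..k]$ has the proper suffix $v=x[j+1..k]$, and $x_{j+1}\prec x_i$.
\begin{itemize}
\item If $x_{j+1}$ and $x_i$ differ within the first $k-j$ symbols, then $v\prec x[i..k]$.
\item Otherwise $v$ is a prefix of $x[i..k]$, so again $v\preceq x[i..k]$.
\end{itemize}
In either case $x[i..k]$ is not Lyndon.

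Steps~1 and~2 show that the longest Lyndon factor at $i$ ends exactly at $\mathrm{nss}(i)-1$. Together with the translation of conditions~1 and~2 into $j=\mathrm{nss}(i)-1$, this yields both directions of the equivalence at once.

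The main obstacle is the border case in Step~1, where $x[h..j]$ is a prefix of $x_i$ and the order between $x_h$ and $x_i$ is decided beyond position $j$. This is where minimality of $\mathrm{nss}(i)$ must be invoked a second time, at the shifted position $i+L$. The same border phenomenon is what breaks the direct analogue in the inverse setting, where inverse Lyndon words may have non-trivial borders.
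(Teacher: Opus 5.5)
Your proof is correct and complete. The paper gives no proof of this statement; it imports it from Franek et al.\ \cite{algorithms-lyndon-array-original}. So your argument is a self-contained replacement for a cited result, not an alternative to an argument in the text.

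Your route has two parts. First you observe that conditions~1 and~2 together say exactly $j=\mathrm{nss}(i)-1$. Then you prove directly that the longest Lyndon factor starting at $i$ ends just before the next smaller suffix. In effect you prove Lemma~\ref{lem:equiv-std} ($\mathit{next}[i]=i+\lambda[i]$), which the paper also only cites, and obtain the ISA characterization as a reformulation of it. What this buys is that the two cited facts are shown to be one fact, each with a short proof. The key step is invoking minimality of $\mathrm{nss}(i)$ at the shifted position $i+L$, which rules out the prefix case in Step~1.

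Your reinterpretation of ``maximal'' is necessary, not just convenient. Under the paper's local definition, $x=ababb$ with $i=1$, $j=2$ satisfies the definition but violates condition~2. So the ``only if'' direction, as literally stated in the paper, fails for standard Lyndon words. This is because standard Lyndon words, unlike inverse Lyndon words, are not prefix-closed (compare Lemma~\ref{lem:prefix-closed}). The longest-factor reading is the one consistent with Definition~\ref{def:lyn-arr} and Lemma~\ref{lem:equiv-std}.

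One small inaccuracy: strict suffix comparisons do not need the sentinel framing. Distinct suffixes of a word have distinct lengths, so they are distinct words and the lexicographic order already separates them.
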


Consequently, $\lambda$ can be obtained from a NSV computation on the inverse suffix array, as follows. Given an array $A[1..n]$, the NSV array is defined by:
\[
\mathrm{NSV}[i] = \min\{j>i \mid A[j] < A[i]\} - i,
\]
with $\mathrm{NSV}[i] = n-i+1$ if no such $j$ exists.

\subsection{The NSS/PSS Framework}\label{sec:nss}

We briefly recall the framework of Ellert~\cite{lyndon-simple} for computing $\lambda$ in $O(n)$ time, since it is the template for our inverse construction. Throughout this section we assume $x = \#\,x(1..n)\,\$$ with $\# < \$ < a$ for all $a \in \Sigma$.

\begin{definition}[NSS and PSS Arrays]\label{def:nss-pss}
For a word $x[1..n]$, define
\begin{align*}
\mathit{next}[i] &= \min\{j \in (i,n] \mid x_j \prec x_i\}, \\
\mathit{prev}[i] &= \max\{j \in [1,i) \mid x_j \prec x_i\},
\end{align*}
with $\mathit{next}[1] = \mathit{next}[n] = n{+}1$, $\mathit{prev}[1] = 0$, and $\mathit{prev}[n] = 1$.
\end{definition}

\begin{example}\label{ex:nss}
For $x = \#banana\$$, with $\# < \$ < a < b < n$, the arrays are:
\begin{center}
\small
$\begin{array}{l|cccccccc}
i & 1 & 2 & 3 & 4 & 5 & 6 & 7 & 8 \\
x & \# & b & a & n & a & n & a & \$ \\
\hline
\mathit{next}[i] & 9 & 3 & 5 & 5 & 7 & 7 & 8 & 9 \\
\mathit{prev}[i] & 0 & 1 & 1 & 3 & 1 & 5 & 1 & 1 \\
\lambda[i]       & 8 & 1 & 2 & 1 & 2 & 1 & 1 & 1 \\
\end{array}$
\end{center}
Figure~\ref{fig:nss-arcs} shows the corresponding NSS and PSS edges.
\end{example}

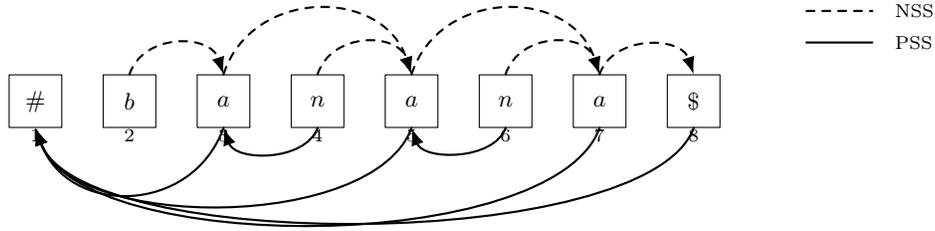
\begin{figure}[t]
\centering
\begin{tikzpicture}[x=1.25cm,y=1cm,>=Latex,line cap=round,line join=round]

  \foreach \i/\c in {1/\#,2/b,3/a,4/n,5/a,6/n,7/a,8/\$} {
    \node[draw,minimum width=7mm,minimum height=7mm,inner sep=0pt] (c\i) at (\i,0) {\footnotesize $\c$};
    \node[font=\scriptsize] at (\i,-0.45) {\i};
  }

  \draw[->,thick,dashed] (c2.north) to[out=72,in=108,looseness=1.18] (c3.north);
  \draw[->,thick,dashed] (c3.north) to[out=72,in=108,looseness=1.25] (c5.north);
  \draw[->,thick,dashed] (c4.north) to[out=72,in=108,looseness=1.22] (c5.north);
  \draw[->,thick,dashed] (c5.north) to[out=72,in=108,looseness=1.25] (c7.north);
  \draw[->,thick,dashed] (c6.north) to[out=72,in=108,looseness=1.22] (c7.north);
  \draw[->,thick,dashed] (c7.north) to[out=72,in=108,looseness=1.12] (c8.north);

  \draw[->,thick] (c3.south) to[out=-110,in=-70,looseness=1.28] (c1.south);
  \draw[->,thick] (c4.south) to[out=-110,in=-70,looseness=0.98] (c3.south);
  \draw[->,thick] (c5.south) to[out=-110,in=-70,looseness=0.75] (c1.south);
  \draw[->,thick] (c6.south) to[out=-110,in=-70,looseness=0.95] (c5.south);
  \draw[->,thick] (c7.south) to[out=-110,in=-70,looseness=0.62] (c1.south);
  \draw[->,thick] (c8.south) to[out=-110,in=-70,looseness=0.52] (c1.south);

  \draw[thick,dashed] (9.2,1.20) -- (9.9,1.20);
  \node[anchor=west,font=\scriptsize] at (10.05,1.20) {NSS};

  \draw[thick] (9.2,0.78) -- (9.9,0.78);
  \node[anchor=west,font=\scriptsize] at (10.05,0.78) {PSS};

\end{tikzpicture}
\caption{NSS and PSS edges for $x=\#banana\$$. Dashed arcs denote next smaller suffix edges, while solid arcs denote previous smaller suffix edges.}
\label{fig:nss-arcs}
\end{figure}

\begin{lemma}[\cite{lyndon-simple}]\label{lem:equiv-std}
For all $i \in [1,n]$, $\mathit{next}[i] = i + \lambda[i]$.
\end{lemma}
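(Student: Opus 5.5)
We prove the two inequalities $\lambda[i] \ge \mathit{next}[i]-i$ and $\lambda[i] \le \mathit{next}[i]-i$ separately. Write $j=\mathit{next}[i]$. By definition $x_j \prec x_i$ and $x_i \prec x_h$ for every $h\in(i,j)$. The sentinel $\$$ is smaller than every letter of $\Sigma$ and is compared only against non-sentinel positions, so $j$ is well defined for $i\in[2,n-1]$. The boundary positions $i=1$ and $i=n$ are checked directly from the conventions $\mathit{next}[1]=\mathit{next}[n]=n+1$. For $i=1$ the word $x[1..n]$ begins with the unique minimum $\#$, so it is Lyndon. For $i=n$ the single letter $\$$ gives $\lambda[n]=1$.

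\textbf{Lower bound: $w=x[i..j-1]$ is Lyndon.} Fix a proper suffix $x[h..j-1]$ with $i<h<j$. We know $x_i \prec x_h$, and we must deduce the strict inequality $x[i..j-1] \prec x[h..j-1]$ on the truncated words. Let $\ell=\mathrm{lce}(i,h)$.
\begin{itemize}
\item If $\ell < j-h$, the first mismatch between $x_i$ and $x_h$ lies inside both truncations, so the inequality transfers.
\item Otherwise $x[h..j-1]$ is a prefix of $x_i$, so $x[h..j-1]=x[i..i+j-h-1]$ is a border of $w$.
\end{itemize}
We rule out the second case. From $x_i\prec x_h$ with common prefix of length $j-h$, we get $x_{i+j-h}\prec x_j$. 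The index $i+j-h$ lies in $(i,j)$, so $x_i\prec x_{i+j-h}$. Chaining gives $x_i\prec x_j$, which contradicts $x_j\prec x_i$. Hence $w$ is Lyndon and $\lambda[i]\ge j-i$.

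\textbf{Upper bound: no longer word is Lyndon.} Suppose $u=x[i..m]$ is Lyndon with $m\ge j$. Then $x[j..m]$ is a proper suffix of $u$, so $u\prec x[j..m]$. We show this is incompatible with $x_j\prec x_i$.
\begin{itemize}
\item If $x_j$ and $x_i$ first differ within the length $m-j+1$, that mismatch already gives $x[j..m]\prec x[i..m]$, a contradiction.
\item Otherwise $x[j..m]$ is a prefix of $x[i..m]$. A nonempty proper prefix is strictly smaller, so again $x[j..m]\prec u$, a contradiction.
\end{itemize}
The case $m=n+1$, i.e.\ running into the sentinel, is covered by the same argument because $\$$ is minimal.

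\textbf{Main obstacle.} The hard step is the border case in the lower bound. It is handled by the chaining argument through $x_{i+j-h}$, which uses the minimality of $j$ a second time. This is precisely the point that fails in the inverse setting, and it will motivate the border-correction term later in the paper.
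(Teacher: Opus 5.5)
Your proof is correct. The paper gives no proof of this lemma. It imports it from Ellert~\cite{lyndon-simple}, and in the paper's own framework it follows at once from Theorem~\ref{thm:fulcro}: $\mathit{ISA}[h]<\mathit{ISA}[i]$ holds iff $x_h\prec x_i$, so $\lambda[i]$ is the NSV value of $\mathit{ISA}$ at $i$, i.e.\ $\mathit{next}[i]-i$.

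You instead give a direct, self-contained argument. First, $x[i..j-1]$ is Lyndon, with the border case excluded by the chain $x_i\prec x_{i+j-h}\prec x_j$. Second, no $x[i..m]$ with $m\ge j$ is Lyndon. The citation route is a one-liner, but it relies on an external characterization phrased in terms of ``maximal'' subwords. In the paper's one-letter-extension sense, ``maximal'' is not the same as the longest Lyndon prefix, because Lyndon words are not prefix-closed: in $aab$, the prefix $a$ cannot be extended to the Lyndon word $aa$, yet $aab$ is Lyndon. Your upper bound excludes every $m\ge j$ at once, which is exactly what the $\max$ in Definition~\ref{def:lyn-arr} needs.

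Two side remarks in your write-up are inaccurate, though the proof does not depend on them.
\begin{itemize}
\item The case $m=n+1$ does not arise, since the framed word ends at position $n$. Your upper bound also never uses the minimality of $\$$.
\item The inverse asymmetry does not sit in your lower-bound border case. In the inverse setting that case becomes trivial, because a proper prefix is smaller, exactly as an inverse Lyndon word requires. What breaks is the prefix case of your upper bound: $x[j..m]$ being a proper prefix of $x[i..m]$ no longer yields a contradiction. This is precisely what lets the maximal inverse Lyndon word extend $\mathrm{border}(z)$ letters past position $\mathit{next}_{-1}[i]-1$.
\end{itemize}
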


\begin{lemma}[\cite{lyndon-simple}]\label{lem:prev-lyndon}
For all $i \in [2,n]$, the word $x[\mathit{prev}[i]..i{-}1]$ is a Lyndon word.
\end{lemma}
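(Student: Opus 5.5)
We need to prove Lemma~\ref{lem:prev-lyndon}: for $i\in[2,n]$, with $j=\mathit{prev}[i]$, the word $w=x[j..i{-}1]$ is Lyndon. The plan is to show directly that $w\prec s$ for every non-empty proper suffix $s=x[k..i{-}1]$ with $j<k<i$. Two facts come from the definition of $\mathit{prev}$:
\begin{itemize}
\item $x_j\prec x_i$;
\item for every $k\in(j,i)$ we have $x_k\succ x_i$, since suffixes are pairwise distinct and $j$ is the \emph{largest} index before $i$ with a smaller suffix.
\end{itemize}
The case $j=1$ (the sentinel $\#$) is handled separately and is trivial: $\#$ is the unique minimal letter, so $x[1..i{-}1]$ starts with a letter strictly smaller than the first letter of any of its proper suffixes, and is therefore Lyndon.

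For general $j$, first compare $x_j$ with $x_k$ for $k\in(j,i)$. We have $x_j\prec x_i\prec x_k$, so $x_j\prec x_k$. We must turn this comparison of full suffixes into the comparison $x[j..i{-}1]\prec x[k..i{-}1]$. Let $s=x[k..i{-}1]$, of length $\ell=i-k<i-j$. There are two cases.

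\textbf{Case 1: $s$ is not a prefix of $x_j$.} Then $x_j$ and $x_k$ first differ within the first $\ell$ positions. Since $x_j\prec x_k$, this mismatch also gives $w\prec s$.

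\textbf{Case 2: $s$ is a prefix of $x_j$.} This case is the main obstacle. We want a contradiction from $x_k=s\,x_i$ and $x_j=s\,x_{j+\ell}$, with $j+\ell<i$. From $x_j\prec x_k$ we cancel the common prefix $s$ and obtain $x_{j+\ell}\prec x_i$. Now set $k'=j+\ell$. Then $j<k'<i$, so by the second fact above $x_{k'}\succ x_i$. This contradicts $x_{j+\ell}\prec x_i$, unless the two suffixes are equal, which is impossible because suffixes are distinct. So Case~2 cannot occur.

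Since $w$ is a proper extension of $s$ in Case~2, that case would anyway only have given $s\prec w$, the wrong direction; ruling it out is exactly what is needed. Hence $w\prec s$ for every proper non-empty suffix $s$ of $w$, and $w$ is Lyndon.

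The delicate point is Case~2: cancelling the common prefix $s$ to reach a suffix strictly between $j$ and $i$, and then invoking the maximality of $\mathit{prev}$. Both distinctness of suffixes (guaranteed by the sentinel $\$$) and the condition $\ell<i-j$ are used there.
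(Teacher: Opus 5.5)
Your proof is correct and follows the same template the paper uses for the mirror statement, Lemma~\ref{lem:prev-inv-lyndon} (the paper itself only cites Ellert for this lemma): transitivity through $x_i$, then a first-mismatch case split. Your Case~2 cancellation, which lands at $j+\ell\in(j,i)$ and contradicts the maximality of $\mathit{prev}$, is exactly the extra step the standard direction needs, because the prefix rule that closes this case in the inverse setting now points the wrong way. Two minor points: the case $j=1$ needs no separate treatment, and suffixes starting at different positions are distinct simply because they have different lengths, not because of the sentinel $\$$.
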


The key properties enabling linear-time construction are the following.

\begin{lemma}[Non-Crossing~\cite{lyndon-simple}]\label{lem:no-crossing}
Let $l_1 < r_1$ and $l_2 < r_2$ be index pairs, each connected by an NSS or PSS edge. Then it is impossible to have $l_1 < l_2 < r_1 < r_2$.
\end{lemma}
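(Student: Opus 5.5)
We argue by contradiction. Suppose $l_1 < l_2 < r_1 < r_2$, where each pair $(l_k,r_k)$ is an NSS edge ($r_k=\mathit{next}[l_k]$) or a PSS edge ($l_k=\mathit{prev}[r_k]$). The plan is to express every edge as a statement about the lexicographic order of the suffixes at its endpoint and at the positions strictly between them. Then we compare the four suffixes $x_{l_1}, x_{l_2}, x_{r_1}, x_{r_2}$ and show the resulting order relations are inconsistent. Since the sentinels make all suffixes pairwise distinct, $\prec$ is a strict total order on suffixes, so every comparison is strict.

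First we record what each edge type guarantees. If $r=\mathit{next}[l]$, then $x_r \prec x_l$ and $x_l \prec x_h$ for all $h\in(l,r)$, by minimality of $r$. If $l=\mathit{prev}[r]$, then $x_l \prec x_r$ and $x_r \prec x_h$ for all $h\in(l,r)$, by maximality of $l$. In both cases one endpoint is strictly smaller than every interior suffix. We call this endpoint the \emph{anchor} of the edge ($l$ for NSS, $r$ for PSS). The other endpoint is the larger of the two endpoints, but it need not be compared with the interior.

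Next we split into four cases according to the edge types.

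\emph{Case NSS/NSS.} Since $l_2\in(l_1,r_1)$, we get $x_{l_1}\prec x_{l_2}$. Since $r_1\in(l_2,r_2)$, we get $x_{l_2}\prec x_{r_1}$. Hence $x_{l_1}\prec x_{r_1}$, contradicting $x_{r_1}\prec x_{l_1}$.

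\emph{Case PSS/PSS.} Since $r_1\in(l_2,r_2)$, we get $x_{r_2}\prec x_{r_1}$. Since $l_2\in(l_1,r_1)$, we get $x_{r_1}\prec x_{l_2}$. Hence $x_{r_2}\prec x_{l_2}$, contradicting $x_{l_2}\prec x_{r_2}$.

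\emph{Case NSS then PSS.} The first edge gives $x_{l_1}\prec x_{l_2}$, and the second gives $x_{r_2}\prec x_{r_1}$. The second edge's anchor $r_2$ lies outside $(l_1,r_1)$, so a chain like the ones above does not close, and this is the case needing extra care. The first edge does give $x_{r_1}\prec x_{l_1}$, and together these yield $x_{r_2}\prec x_{r_1}\prec x_{l_1}\prec x_{l_2}$. This contradicts the second edge's requirement $x_{l_2}\prec x_{r_2}$.

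\emph{Case PSS then NSS.} The first edge, anchored at $r_1$, gives $x_{r_1}\prec x_{l_2}$ because $l_2\in(l_1,r_1)$. The second edge, anchored at $l_2$, gives $x_{l_2}\prec x_{r_1}$ because $r_1\in(l_2,r_2)$. These two relations contradict each other directly.

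The only step needing care is the mixed case in which neither interior-point relation alone closes a chain. There the fix is to also use the endpoint comparison of the edge in which the other edge's left endpoint is interior. The boundary conventions ($\mathit{next}[1]=n{+}1$, $\mathit{prev}[n]=1$) agree with the sentinel ordering, so they satisfy the same relations and need no separate treatment.
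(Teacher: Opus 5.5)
Your four-case argument is correct. In every case the relations you use hold: $x_{r}\prec x_{l}\prec x_h$ for an NSS edge $r=\mathit{next}[l]$, and $x_l\prec x_r\prec x_h$ for a PSS edge $l=\mathit{prev}[r]$, with $h$ strictly between. Each chain closes as you claim.

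However, your side remark that the non-anchor endpoint ``is the larger of the two endpoints, but it need not be compared with the interior'' is false, and it is what creates the case split. By the very relations you use, the non-anchor endpoint is the \emph{smaller} endpoint. By transitivity it therefore also lies below every interior suffix, so both endpoints of any NSS or PSS edge are lexicographically smaller than every suffix starting strictly between them.

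With this observation, one line covers all four cases: $l_2\in(l_1,r_1)$ gives $x_{r_1}\prec x_{l_2}$, while $r_1\in(l_2,r_2)$ gives $x_{l_2}\prec x_{r_1}$. The paper only cites Ellert for this lemma, but this uniform argument is the mirror image of its proof of Lemma~\ref{lem:inv-no-crossing}, where interior suffixes are smaller than both endpoints.

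Your version spells out which relations each configuration uses, but the ``mixed case needing extra care'' is an artifact of the misstatement. The uniform version is shorter and shows at once why the proof transfers to NGS/PGS edges by reversing the order.

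For the conventional edges $(0,1)$, $(n,n{+}1)$ and $(1,n{+}1)$, it is cleaner to observe that they can never appear in a configuration $l_1<l_2<r_1<r_2$. Appealing to the sentinel ordering instead would require giving a meaning to $x_0$ and $x_{n+1}$.
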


\begin{lemma}[Chain Iteration~\cite{lyndon-simple}]\label{lem:chain}
For any $r \in [2,n]$:
\begin{enumerate}
\item $\mathit{prev}[r] = \mathit{prev}^*[r{-}1]$,
\item $\mathit{next}[l] = r$ if and only if $l = \mathit{prev}^*[r{-}1]$ and $l > \mathit{prev}[r]$,
\end{enumerate}
where $\mathit{prev}^*[r{-}1]$ denotes an element on the PSS chain from $r{-}1$.
\end{lemma}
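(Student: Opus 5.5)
The plan is to work directly from Definition~\ref{def:nss-pss}, using only the fact that, thanks to the sentinel framing $x=\#\,x(1..n)\,\$$, all suffixes are pairwise distinct and $x_1$ is the lexicographic minimum. I write the PSS chain of $r{-}1$ as $c_0=r{-}1$, $c_{k+1}=\mathit{prev}[c_k]$. It is strictly decreasing and reaches $1$, since $x_1\prec x_j$ for all $j>1$, so $\mathit{prev}[j]\ge 1$ is always defined. The key auxiliary fact I will establish first is:

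\emph{Claim.} For $l\in[1,r{-}1]$, $l$ lies on the chain of $r{-}1$ if and only if $x_l\prec x_j$ for every $j\in(l,r{-}1]$.

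For ($\Rightarrow$) I use induction along the chain. The case $c_0$ is vacuous. If $c_k$ satisfies the property, then since $c_{k+1}=\mathit{prev}[c_k]$ is the largest $j<c_k$ with $x_j\prec x_{c_k}$, every $j\in(c_{k+1},c_k)$ has $x_j\succ x_{c_k}\succ x_{c_{k+1}}$. Moreover $x_{c_k}\succ x_{c_{k+1}}$, and every $j\in(c_k,r{-}1]$ has $x_j\succ x_{c_k}\succ x_{c_{k+1}}$ by the hypothesis. For ($\Leftarrow$), suppose $l$ has the property but is not on the chain. Since the chain ends at $1\le l$, there are consecutive elements with $c_{k+1}<l<c_k$. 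The definition of $\mathit{prev}[c_k]$ then gives $x_l\succ x_{c_k}$, while the property of $l$ gives $x_{c_k}\succ x_l$, a contradiction. As a by-product, the values $x_{c_0}\succ x_{c_1}\succ\cdots$ are strictly decreasing along the chain.

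For item~1, let $c$ be the first chain element with $x_c\prec x_r$. It exists because $x_1$ is minimal. Every earlier chain element $c_k$ has $x_{c_k}\succ x_r$. Every position strictly between $c$ and $r$ is either such a chain element or lies between consecutive chain elements, and in the latter case the Claim's argument shows its suffix exceeds the later element's, hence exceeds $x_r$. So $c$ is the largest $j<r$ with $x_j\prec x_r$, that is, $c=\mathit{prev}[r]$. Since the chain values are monotone, the chain elements $l$ with $x_l\succ x_r$ are exactly those with $l>\mathit{prev}[r]$.

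For item~2, $\mathit{next}[l]=r$ is equivalent to two conditions: $x_r\prec x_l$, and $x_j\succ x_l$ for all $j\in(l,r)$. By the Claim, the second condition says exactly that $l$ is on the chain of $r{-}1$. Given that, the first condition says $l>\mathit{prev}[r]$ by the monotonicity just noted. The boundary conventions ($\mathit{next}[1]=n{+}1$, $\mathit{prev}[n]=1$) need a separate one-line check; the case $l=1$ is excluded automatically since $1\le\mathit{prev}[r]$.

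The main obstacle I expect is the converse direction of the Claim. It requires the observation that non-chain positions are always ``shadowed'' by a later chain element with a smaller suffix, and this rests on strict distinctness of suffixes, which is guaranteed only by the sentinels.
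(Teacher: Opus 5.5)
Your proof is correct, but it takes a different route from the one the paper uses. The paper does not prove Lemma~\ref{lem:chain} itself; it imports it from Ellert. Its proof of the mirror statement, Lemma~\ref{lem:inv-chain}, shows the intended argument. Item~1 and the forward direction of item~2 follow by contradiction from the non-crossing lemma: if $\mathit{prev}[r]$, or an $l$ with $\mathit{next}[l]=r$, were not on the chain, some chain edge $(\mathit{prev}[r'],r')$ would straddle it and cross the edge ending at $r$. The converse of item~2 then invokes, without proof, a ``chain property'' saying that every suffix strictly between $l$ and $r$ compares correctly with $x_l$.

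You never use non-crossing. Instead you characterize chain membership directly: $l$ is on the chain exactly when $x_l\prec x_j$ for all $j\in(l,r{-}1]$, i.e., the chain consists of the right-to-left suffix minima, the content of the monotone stack. Both items then follow from the definitions. The paper's route buys brevity and reuse, since non-crossing is needed anyway for the algorithm and the argument dualizes verbatim. Yours buys self-containment, because your ($\Rightarrow$) direction of the Claim is exactly the chain property the paper's converse takes for granted. It also gives a sharper conclusion: $\mathit{prev}[r]$ is the \emph{first} chain element whose suffix is smaller than $x_r$, which is precisely the stopping rule of the while loop in the algorithm.

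One small correction to your closing remark: distinctness of suffixes does not depend on the sentinels, since suffixes at different positions have different lengths. What $\#$ actually provides is that $x_1$ is the global minimum, so every chain terminates at $1$ and $\mathit{prev}[r]\ge 1$ exists for $r\ge 2$. Likewise, under $\# < \$ < a$ the stated boundary values $\mathit{next}[1]=n{+}1$ and $\mathit{prev}[n]=1$ already agree with the general definition, so no separate check is needed.
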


\begin{lemma}[LCE Acceleration~\cite{lyndon-simple}]\label{lem:lce-accel}
Let $l = \mathit{prev}[k]$ and $r = \mathit{next}[k]$. Then:
\begin{enumerate}
\item if $\mathrm{lce}(l,k) = \mathrm{lce}(k,r)$, then $\mathrm{lce}(l,r) \ge \mathrm{lce}(k,r)$;
\item if $\mathrm{lce}(l,k) < \mathrm{lce}(k,r)$, then $\mathrm{lce}(l,r) = \mathrm{lce}(l,k)$ and $\mathit{prev}[r] = l$;
\item if $\mathrm{lce}(l,k) > \mathrm{lce}(k,r)$, then $\mathrm{lce}(l,r) = \mathrm{lce}(k,r)$ and $\mathit{next}[l] = r$.
\end{enumerate}
\end{lemma}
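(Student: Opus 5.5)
The plan is to use only the definitions of $\mathit{prev}$ and $\mathit{next}$ together with the elementary behaviour of longest common extensions among three suffixes. Non-Crossing (Lemma~\ref{lem:no-crossing}) and Chain Iteration (Lemma~\ref{lem:chain}) are not needed.

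First I record what the hypotheses give. Since $l=\mathit{prev}[k]$, we have $x_l \prec x_k$ and $x_j \succ x_k$ for every $j\in(l,k)$. Since $r=\mathit{next}[k]$, we have $x_r \prec x_k$ and $x_j \succ x_k$ for every $j\in(k,r)$. Because the framed word ends with the unique sentinel $\$$, no suffix is a proper prefix of another. Hence any two distinct suffixes $x_a,x_b$ satisfy $\mathrm{lce}(a,b)<\min(|x_a|,|x_b|)$ and are ordered by the symbol at offset $\mathrm{lce}(a,b)+1$. Write $p=\mathrm{lce}(l,k)$ and $q=\mathrm{lce}(k,r)$.

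Case 1 ($p=q$) uses only the standard triangle inequality $\mathrm{lce}(l,r)\ge \min(\mathrm{lce}(l,k),\mathrm{lce}(k,r))$. It holds because $x_l$ and $x_r$ both share their first $\min(p,q)$ symbols with $x_k$. This gives $\mathrm{lce}(l,r)\ge q$.

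Case 2 ($p<q$) is the heart of the argument:
\begin{itemize}
\item The suffixes $x_l,x_k,x_r$ agree on their first $p$ symbols.
\item At offset $p+1$ we have $x_l[p+1] < x_k[p+1]$, because $x_l\prec x_k$ and $p$ is their lce.
\item Since $q>p$, also $x_k[p+1]=x_r[p+1]$. Hence $x_l$ and $x_r$ first differ at offset $p+1$, so $\mathrm{lce}(l,r)=p$ and $x_l\prec x_r$.
\item To get $\mathit{prev}[r]=l$, I check that every $j\in(l,r)$ has $x_j\succ x_r$, using transitivity through $x_k$. For $j=k$ this is $x_k\succ x_r$. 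For $j\in(l,k)\cup(k,r)$ we have $x_j\succ x_k\succ x_r$.
\item Combined with $x_l\prec x_r$, this makes $l$ the largest index before $r$ whose suffix is smaller than $x_r$.
\end{itemize}

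Case 3 ($p>q$) is symmetric to Case 2:
\begin{itemize}
\item At offset $q+1$ we have $x_r[q+1]<x_k[q+1]=x_l[q+1]$. This gives $\mathrm{lce}(l,r)=q$ and $x_r\prec x_l$.
\item Every $j\in(l,r)$ satisfies $x_j\succeq x_k\succ x_l$, so no index strictly between $l$ and $r$ qualifies as the next smaller suffix of $l$.
\item Hence $\mathit{next}[l]=r$.
\end{itemize}

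The main point needing care is not the combinatorics but the boundary conventions of Definition~\ref{def:nss-pss}: $\mathit{next}[1]=\mathit{next}[n]=n+1$, $\mathit{prev}[1]=0$, $\mathit{prev}[n]=1$. I would restrict attention to $k$ with $l\ge 1$ and $r\le n$, since these are the only cases where both edges exist as genuine index pairs of the framed word. The sentinels $\#$ and $\$$ guarantee these edges exist and that all suffix comparisons are strict. If $r$ were the virtual position $n+1$, I would treat $x_{n+1}=\varepsilon$ as smaller than everything and observe that the same argument goes through, because $\mathrm{lce}(\cdot,n+1)=0$ places us in Case 1 or Case 3.
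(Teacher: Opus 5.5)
Your proof is correct and takes essentially the paper's route. The paper only cites Ellert for this standard version, but its proof of the mirrored Lemma~\ref{lem:inv-lce-accel} is your argument with inequalities reversed: the triangle inequality for Case 1, and in Cases 2 and 3 the first mismatch at offset $\mathrm{lce}+1$ together with transitivity through $x_k$ to fix the edge. Your use of the unique sentinel to exclude prefix relations, and your check that the boundary conventions agree with the definitions, are slightly more careful than the paper's write-up.
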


Together with the SMART-LCE procedure of~\cite{lyndon-simple}, which computes all required LCE values in amortized $O(n)$ time via a global rightmost-inspected-position variable, these lemmas yield an algorithm that computes $\lambda$ in $O(n)$ time and $O(n)$ space on general ordered alphabets.

\section{The Inverse Lyndon Array}\label{sec:inverse}

We now develop the theory of the inverse Lyndon array. Throughout this section, $x = \#\,x(1..n)\,\$$, with sentinels satisfying $\# > \$ > a$ for all $a \in \Sigma$.

\subsection{Inverse Lyndon Words}

\begin{definition}[\cite{inverse-lyndon}]\label{def:inv-lyn}
A word $w \in \Sigma^+$ is an \emph{inverse Lyndon word} if $s \prec w$ for every non-empty proper suffix~$s$ of~$w$.
\end{definition}

Unlike standard Lyndon words, inverse Lyndon words may have non-trivial borders. For example, $\mathit{dabda}$ is an inverse Lyndon word with maximal border $\mathit{da}$. The following prefix-closure property is fundamental~\cite{inverse-lyndon}.

\begin{lemma}[\cite{inverse-lyndon}]\label{lem:prefix-closed}
Every non-empty prefix of an inverse Lyndon word is an inverse Lyndon word.
\end{lemma}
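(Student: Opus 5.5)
We argue directly from Definition~\ref{def:inv-lyn}. Let $w=w[1..m]$ be an inverse Lyndon word, fix $k\in[1,m]$, and set $p=w[1..k]$. The case $k=m$ is trivial, so assume $k<m$. We must show that every non-empty proper suffix $s$ of $p$ satisfies $s\prec p$. Such an $s$ has the form $s=w[j..k]$ with $2\le j\le k$. It is the length-$(k-j+1)$ prefix of the proper suffix $t=w[j..m]$ of $w$, and the hypothesis gives $t\prec w$.

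We then compare $t$ with $w$ using the two clauses of the definition of $\prec$. The first clause, $w=tu$ with $u$ non-empty, is impossible because $|t|<|w|$. So there is a first mismatch: $t=urv$ and $w=usv'$ with $r<s$ letters. Let $q=|u|$, so the mismatch sits at position $q+1$ of both words, which is position $j+q$ in $w$ for $t$ and position $q+1$ for $w$ itself.

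The proof splits into two cases according to whether this mismatch lies inside the window being truncated.

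\emph{Case 1: $q+1\le k-j+1$.} The mismatch lies inside $s$. Since $k-j+1<k$, it also lies inside $p$. So $s=u r\cdots$ and $p=u s'\cdots$ with $r<s'$, where $s'$ denotes the letter of $w$ at position $q+1$. Hence $s\prec p$ by the second clause.

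\emph{Case 2: $q\ge k-j+1$.} The whole of $s$ is a common prefix of $t$ and $w$, so $s$ is a prefix of $p$. Since $|s|<|p|$, $s$ is a proper prefix of $p$, and $s\prec p$ by the first clause.

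The only delicate point is this case split. We must check that truncation never creates a situation where $s$ is not a prefix of $p$ yet no mismatch is visible. This cannot happen: $s$ and $p$ are prefixes of $t$ and $w$ respectively, and $|s|<|p|$, so the two cases are exhaustive.

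Finally, the argument never uses $k<m$ except to make $s$ a suffix of $p$ rather than of $w$, so it applies uniformly to every non-empty prefix.
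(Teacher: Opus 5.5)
Your two-case split is the right one, but the step that sets it up is false. You dismiss the first clause of $t\prec w$, namely $w=tu$ with $u\neq\varepsilon$, ``because $|t|<|w|$''. Length only rules out $w$ being a proper prefix of $t$. The clause you need, $t$ being a proper prefix of $w$, is perfectly possible, and it happens exactly when the suffix $t$ is a border of $w$.

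The paper's own example shows this: in $w=\mathit{dabda}$, the proper suffix $t=\mathit{da}$ satisfies $t\prec w$ only through the prefix clause. There is no mismatch at all, so your $q=|u|$ is undefined. Your exhaustiveness paragraph does not rescue this, because it presupposes that a mismatch exists. Borders are precisely what separates inverse Lyndon words from Lyndon words, so this case cannot be skipped.

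The repair is one line. If $t$ is a proper prefix of $w$, then $s$, being a prefix of $t$, is a prefix of $w$ of length $k-j+1<k$. Hence $s$ is a proper prefix of $p$, and $s\prec p$ by the first clause; this is your Case~2 argument with $q:=|t|$.

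More cleanly, let $q$ be the length of the longest common prefix of $t$ and $w$, and split on $q<|s|$ versus $q\ge|s|$. In the first case, $q<|s|\le|t|<|w|$ forces a genuine mismatch, whose letter in $t$ is smaller because $t\prec w$. The second case is your Case~2. With this fix the proof is correct.

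The paper itself gives no proof of this lemma; it only cites the literature. Your case split is the same one the paper uses in the proof of Lemma~\ref{lem:prev-inv-lyndon}. There, the unique sentinel guarantees that a mismatch between the two suffixes always exists, and that guarantee is unavailable for an arbitrary word $w$.
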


\begin{definition}[Maximal Inverse Lyndon Subword]\label{def:inv-max}
A subword $x[i..j]$ is a \emph{maximal inverse Lyndon subword} starting at~$i$ if it is an inverse Lyndon word and either $j = n$ or $x[i..j{+}1]$ is not an inverse Lyndon word.
\end{definition}

\begin{definition}[Inverse Lyndon Array]\label{def:inv-arr}
The Inverse Lyndon Array is defined by:
\[
\lambda^{-1}[i] = \max\{m \in [1,n-i+1] \mid x[i..i+m-1] \text{ is an inverse Lyndon word}\}.
\]
\end{definition}

\begin{example}\label{ex:inv-arr}
Let $x = \#\mathit{aababbaa}\$$, with $\# > \$ > b > a$. The inverse Lyndon array and the associated NGS/PGS arrays are:
\begin{center}
\small
$\begin{array}{l|cccccccccc}
i & 1 & 2 & 3 & 4 & 5 & 6 & 7 & 8 & 9 & 10 \\
x & \# & a & a & b & a & b & b & a & a & \$ \\
\hline
\lambda^{-1}        & 10 & 2 & 1 & 3 & 1 & 4 & 3 & 2 & 1 & 1 \\
\mathit{next}_{-1}  & 11 & 3 & 4 & 6 & 6 & 10 & 10 & 9 & 10 & 11 \\
\mathit{prev}_{-1}  & 0  & 1 & 1 & 1 & 4 & 1 & 6 & 7 & 7 & 1 \\
\mathit{nlce}       & 0  & 1 & 0 & 1 & 0 & 0 & 0 & 1 & 0 & 0 \\
\end{array}$
\end{center}
At $i=4$, the maximal inverse Lyndon subword is $x[4..6] = \mathit{bab}$, so $\lambda^{-1}[4] = 3$. Its longest border has length $1$, hence $\mathit{next}_{-1}[4] = 6$ and $\mathit{nlce}[4] = 1$, in agreement with Corollary~\ref{cor:recover}. At $i=6$, the maximal inverse Lyndon subword is $x[6..9] = \mathit{bbaa}$, so $\lambda^{-1}[6] = 4$. Since this factor is unbordered, $\mathit{next}_{-1}[6] = 10$ and $\mathit{nlce}[6] = 0$.
\end{example}

\subsection{NGS/PGS Arrays and their Relationship to $\lambda^{-1}$}

\begin{definition}[NGS and PGS Arrays]\label{def:ngs-pgs}
For a word $x[1..n]$ with inverted sentinels, define:
\begin{align*}
\mathit{next}_{-1}[i] &= \min\{j \in (i,n] \mid x_j \succ x_i\}, \\
\mathit{prev}_{-1}[i] &= \max\{j \in [1,i) \mid x_j \succ x_i\},
\end{align*}
with $\mathit{next}_{-1}[1] = \mathit{next}_{-1}[n] = n{+}1$, $\mathit{prev}_{-1}[1] = 0$, and $\mathit{prev}_{-1}[n] = 1$.
\end{definition}

Thus, in the inverse setting, the relevant edges connect nearest \emph{greater} suffixes rather than nearest smaller suffixes.

\begin{lemma}[NGS and $\lambda^{-1}$ equivalence]\label{lem:equiv-inv}
Let $z$ be the maximal inverse Lyndon subword starting at $i$, and let $\mathrm{border}(z)$ denote the length of its longest proper border. Then:
\[
\mathit{next}_{-1}[i] = i + \lambda^{-1}[i] - \mathrm{border}(z).
\]
\end{lemma}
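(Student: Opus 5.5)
The plan is to set $z = x[i..j]$ with $m = |z| = \lambda^{-1}[i]$ and $b = \mathrm{border}(z)$, write $u = z[1..b]$ for the longest proper border, and put $p = j + 1 - b$. The statement then says that $p$ is the nearest index after $i$ whose suffix is greater than $x_i$. I will prove two claims: (A) $x_h \prec x_i$ for every $h \in (i, p)$; and (B) $x_p \succ x_i$. Together they give $\mathit{next}_{-1}[i] = p = i + \lambda^{-1}[i] - \mathrm{border}(z)$. The boundary cases $i = 1$ and $i = n$ (the sentinels) will be checked directly against the conventions in Definition~\ref{def:ngs-pgs}; $\#$ is the maximal symbol, so $\lambda^{-1}[1]$ spans the whole framed word.

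For (A), fix $h \in (i, p)$ and let $s = x[h..j]$. This is a non-empty proper suffix of $z$ with $|s| > b$, and since $z$ is inverse Lyndon, $s \prec z$. If $s$ were a prefix of $z$, it would be a proper border of length $> b$, which is impossible. Hence $s \prec z$ is witnessed by a strict mismatch at some position $\le |s|$. That mismatch lies inside $x[h..j]$ versus $x[i..i+|s|-1]$, so it persists when both words are extended to the full suffixes, giving $x_h \prec x_i$.

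For (B), note that $x_p$ and $x_i$ both begin with $u$. It therefore suffices to show $c := x[j+1] > z[b+1]$. (Here $b < m$, so $z[b+1]$ exists; the case $j+1$ being the sentinel $\$$ is handled separately using the maximality of $\$$ among letters.) By maximality, $za$ with $a = c$ is not inverse Lyndon, so some proper suffix $t$ of $za$ satisfies $t \succeq za$.

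\begin{itemize}
\item If $t$ starts at a position with $|t| - 1 > b$, the argument of (A) applied to $t$ minus its last letter gives a strict mismatch against $z$ that survives appending $a$, so $t \prec za$. Hence $t = s'a$ with $s'$ a border of $z$ (possibly empty), so $|s'| \le b$ and $s'$ is also a suffix of $u$.
\item Since $s'a$ cannot be a proper prefix of $za$ (that would make $s'a \prec za$), the comparison is strict at position $|s'|+1$: $a > z[|s'|+1]$.
\item If $s' = u$, this is exactly $a > z[b+1]$. Otherwise $|s'| < b$. Suppose for contradiction $a \le z[b+1]$. The occurrence of $s'$ as a suffix of the prefix $u$ sits inside $z$ at position $b - |s'| + 1 > 1$ and is followed by $z[b+1] \ge a > z[|s'|+1]$. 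So that proper suffix of $z$ exceeds $z$ at a strict mismatch, contradicting that $z$ is inverse Lyndon.
\end{itemize}

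The main obstacle is this last step: transferring the violation from an arbitrary shorter border $s'$ to the longest border $u$. The nested-border observation just described (a suffix of $u$ recurs in $z$ followed by $z[b+1]$) is the tool I would rely on. Prefix closure (Lemma~\ref{lem:prefix-closed}) is used implicitly to know that $\lambda^{-1}[i]$ is attained by the maximal inverse Lyndon subword $z$.
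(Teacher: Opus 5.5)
Your proof is correct, and it is organized differently from the paper's. The paper writes $z=bhb$ with $h\in\Sigma^+$ and argues by contradiction on where the next greater suffix can start. It excludes a start strictly before the second occurrence of $b$ via $z\succ\beta b$. It excludes a start strictly inside that occurrence by shifting to the corresponding suffix inside the first occurrence of $b$, which yields $b\alpha\succ x_i$ at an earlier position.

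You instead prove the two defining inequalities (A) and (B) directly. Your (A) matches the paper's first case, but you make explicit a step the paper omits: $z\succ\beta b$ transfers to $z\alpha\succ\beta b\alpha$ only because $\beta b$, being longer than the longest border, cannot be a prefix of $z$.

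Your (B) is genuinely different. It is where you use the maximality of $z$, through a violating suffix of $z\,x[j+1]$. You combine it with the nested-border observation: a shorter border $s'$ recurs at position $b-|s'|+1$ followed by $z[b+1]$, which forces $z[b+1]\le z[|s'|+1]$.

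The paper's argument never invokes the maximality of $z$. So it never shows that the suffix starting at $i+|z|-\mathrm{border}(z)$ actually exceeds $x_i$, and it does not rule out the next greater suffix starting at or beyond the end of $z$. Maximality is really needed here. For the non-maximal prefix $z=ba$ of $x_i=babaa\cdots$ (with $a<b$), the formula would give $i+2$, yet $x_{i+2}=baa\cdots\prec x_i$.

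In addition, the decomposition $bhb$ with $h$ non-empty excludes adjacent or overlapping borders, as in $baba$ or $aaa$. Your index-based formulation with $u=z[1..b]$ and $p=j+1-b$ handles these uniformly. So the paper's version is shorter, while yours is complete and more general.

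Two small points should be stated explicitly.
\begin{itemize}
\item In the first bullet of (B), excluding $|s'|>b$ does not by itself make $s'$ a prefix of $z$. It is one, because otherwise $s'\prec z$ is witnessed by a strict mismatch inside $s'$ that survives appending $x[j+1]$.
\item For $2\le i\le n-1$, your endpoint satisfies $j<n$, because $x[i..n]$ has the proper suffix $\$\succ x[i..n]$. Hence $x[j+1]$ always exists.
\end{itemize}
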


\begin{proof}
Write $z = bhb$, where $b \in \Sigma^*$ is the border, possibly empty, and $h \in \Sigma^+$. Then $x_i = bhb\alpha$ for some $\alpha \in \Sigma^+$. Let $j = \mathit{next}_{-1}[i]$. We claim that $x_j = b\alpha$, that is, the suffix $x_j$ starts at the second occurrence of the border.

Suppose first that $x_j$ starts before that occurrence, say $x_j = \beta b\alpha$ with $\beta \in \Sigma^+$ and $z = b\gamma\beta b$. Then $\beta b$ is a proper suffix of $z$, so by Definition~\ref{def:inv-lyn} we have $z \succ \beta b$. Hence $z\alpha \succ \beta b\alpha$, that is, $x_i \succ x_j$, contradicting the choice of $j$.

Suppose instead that $x_j$ starts after the border, say $x_j = \beta\alpha$ with $b = \gamma\beta$ and $\gamma,\beta \in \Sigma^+$. Consider the intermediate suffix $x_{k_m} = \beta h\gamma\beta\alpha$. Since $j = \mathit{next}_{-1}[i]$, we have $x_{k_m} \prec x_i \prec x_j$. However, from $x_j = \beta\alpha \succ \beta h\gamma\beta\alpha = x_{k_m}$, we obtain $b\alpha = \gamma\beta\alpha \succ \gamma\beta h\gamma\beta\alpha = x_i$, which implies $x_{k_m} \succ x_i$, again a contradiction.

Therefore $x_j = b\alpha$, and so
\[
j = i + |z| - |b| = i + \lambda^{-1}[i] - \mathrm{border}(z).
\]
\end{proof}

The border term is the main structural difference with the standard case. Standard Lyndon words are unbordered, so the corresponding correction term is always zero.

\begin{lemma}[Border via LCE]\label{lem:border-lce}
Let $z$ be the maximal inverse Lyndon subword starting at~$i$, and let $j = \mathit{next}_{-1}[i]$. Then
\[
\mathrm{lce}(i,j) = \mathrm{border}(z).
\]
\end{lemma}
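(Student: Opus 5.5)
We build on Lemma~\ref{lem:equiv-inv}, which gives $x_j = b\alpha$ for $j=\mathit{next}_{-1}[i]$. Here we write $z = bhb$, with $b$ the longest proper border of $z$, and $x_i = z\alpha$. The common prefix $b$ of $x_i = bhb\alpha$ and $x_j = b\alpha$ immediately gives $\mathrm{lce}(i,j) \ge |b| = \mathrm{border}(z)$. The whole task is the reverse inequality: we must show that $x_i$ and $x_j$ already disagree at offset $|b|$. That is, with $c$ the first letter of $h$ (so $h\neq\varepsilon$ and $c$ is defined) and $a$ the first letter of $\alpha$, we must show $a \neq c$. Since the framed word ends with the sentinel $\$$, $\alpha$ is non-empty, so $a$ exists. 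If $\alpha$ consists of $\$$ alone, then $a=\$$ is distinct from every alphabet letter and we are done; this covers the case $j = n$.

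The main step is a contradiction argument. Suppose $a = c$. Then $x_j$ begins with $bc$, and we examine the extended word $z' = x[i..i+|z|] = bhbc$. We want to show that $z'$ is an inverse Lyndon word, which would contradict the maximality of $z$ (Definition~\ref{def:inv-max}). Take any non-empty proper suffix $s'$ of $z'$ and write $s' = sc$, where $s$ is a possibly empty proper suffix of $z$.
\begin{itemize}
\item If $s$ is not a prefix of $z$, then $s \prec z$ and $s$ is not a prefix of $z$, so they differ at a mismatch position inside $s$. Hence $sc \prec z \prec z'$.
\item If $s$ is a prefix of $z$, then $s$ is a border of $z$ (or empty), so $|s| \le |b|$, and $s$ is a prefix of $b$. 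We then compare $sc$ with the prefix $z[1..|s|+1]$ of $z'$.
\end{itemize}
For $s = b$, the compared prefix is $bc$ itself, so $s' = bc$ is a proper prefix of $z'$ and hence $s' \prec z'$. For shorter borders $s$ (including $s = \varepsilon$), we need the letter $z[|s|+1]$ to be at least $c$, with equality handled as a prefix relation.

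This last comparison is the main obstacle. I would handle it by exploiting the fact that $s$ is itself a border of $b$ (borders of $z$ shorter than $b$ are borders of $b$). Since $z$ is inverse Lyndon, the suffix $s\,h\,b$-type comparisons give information about the letters that follow each occurrence of $s$. Concretely, because the suffix of $z$ starting at the second occurrence of $b$, which is $b$ itself, satisfies $b \prec z$ only as a prefix, we need a finer comparison. Consider the suffix of $z$ starting at the occurrence of $s$ that ends the first copy of $b$. It reads $s\,h\,b$. Inverse Lyndonness gives $s h b \prec z$. Its letter after $s$ is $c$, while the letter of $z$ after $s$ is $z[|s|+1]$. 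This yields $c \le z[|s|+1]$ unless $shb$ is a prefix of $z$, which is impossible by length. Hence $sc \preceq z[1..|s|+1]$. If the inequality is strict, we get $s' \prec z'$. If equality holds, $s'$ is a proper prefix of $z'$, and again $s' \prec z'$.

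Thus every proper suffix of $z'$ is smaller than $z'$, so $z'$ is an inverse Lyndon word, contradicting maximality. Therefore $a \neq c$, which gives $\mathrm{lce}(i,j) = |b| = \mathrm{border}(z)$.

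Finally, I would double-check the degenerate case $b = \varepsilon$, where the claim reduces to $x[i] \neq x[j]$; this falls under the case $s=\varepsilon$ above. I would also confirm that the sentinel conventions never make $\alpha$ empty.
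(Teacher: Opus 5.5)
Your proof is correct. It shares the paper's skeleton but takes a genuinely different route at the one step that matters.

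Both proofs take $x_j=b\alpha$ from Lemma~\ref{lem:equiv-inv} and read off $\mathrm{lce}(i,j)\ge|b|$. For the reverse inequality the paper argues that, since $x_j\succ x_i$ and both begin with $b$, ``the comparison is decided at position $|b|+1$'', forcing $\alpha[1]>h[1]$. But $x_j\succ x_i$ plus the common prefix $b$ only yields $\alpha\succ hb\alpha$, which does not exclude $\alpha[1]=h[1]$. That step therefore presupposes the mismatch it asserts.

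You instead use maximality of $z$, the hypothesis that actually makes the statement true. The suffix of $z$ that begins where a shorter border $s$ ends the first copy of $b$ shows $c\le z[|s|+1]$. Hence $zc$ would be inverse Lyndon if $\alpha[1]=c$. The paper's route buys brevity; yours proves the mismatch at offset $|b|$ rather than assuming it.

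Your inequality gives even more. If $\alpha[1]<c$, then $z\alpha[1]$ is also inverse Lyndon, so maximality forces $\alpha[1]>c$. So the suffix at the second occurrence of $b$ really is greater than $x_i$. This is exactly what is needed to rule out $j\ge i+|z|$ in Lemma~\ref{lem:equiv-inv}, a case the paper's proof of that lemma does not treat.

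Three small repairs:
\begin{enumerate}
\item Your claim that $shb$ cannot be a prefix of $z$ ``by length'' is false, since $|shb|<|z|$. It is harmless: in that case $c=z[|s|+1]$, so $c\le z[|s|+1]$ holds regardless and the ``unless'' clause can simply be dropped.
\item $h\neq\varepsilon$ is not automatic, because the longest border may overlap itself. For example, $z=\mathit{aaa}$ is the maximal inverse Lyndon factor at $i=2$ of $\#\mathit{aaad}\$$ with $a<d$, and its longest border is $\mathit{aa}$, so $z=bhb$ with $h$ non-empty is unavailable. Define $c=z[|b|+1]$ directly and use the suffix of $z$ starting at position $|b|-|s|+1$; your argument then goes through verbatim.
\item $\alpha$ is empty for $i\in\{1,n\}$, where $j=n+1$ by convention. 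Those cases are trivial, but your closing claim that the sentinels never make $\alpha$ empty is not accurate there.
\end{enumerate}
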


\begin{proof}
Using the notation above, Lemma~\ref{lem:equiv-inv} yields $x_j = b\alpha$. Since $x_i = bhb\alpha$ and $x_j = b\alpha$, the suffixes $x_i$ and $x_j$ agree on the prefix $b$, so $\mathrm{lce}(i,j) \ge |b|$.

For the upper bound, note that $x_j \succ x_i$ since $j = \mathit{next}_{-1}[i]$. The two suffixes share a common prefix of length $|b|$ (both begin with $b$); at the next position $x_i[|b|{+}1] = h[1]$ while $x_j[|b|{+}1] = \alpha[1]$. Since $x_j \succ x_i$ and they agree on the first $|b|$ positions, the comparison is decided at position $|b|{+}1$, which forces $\alpha[1] > h[1]$. In particular $\alpha[1] \neq h[1]$, so the common prefix ends exactly at length $|b|$ and $\mathrm{lce}(i,j) = |b| = \mathrm{border}(z)$.
\end{proof}

\begin{figure}[t]
\centering
\begin{tikzpicture}[x=1.0cm,y=1.1cm,>=Latex,line cap=round,line join=round]

  \draw[thick] (0,0) rectangle (10.6,0.82);
  \fill[pattern=north east lines] (0,0) rectangle (2.0,0.82);
  \fill[pattern=horizontal lines] (2.0,0) rectangle (7.4,0.82);
  \fill[pattern=north east lines] (7.4,0) rectangle (9.4,0.82);
  \fill[pattern=dots] (9.4,0) rectangle (10.6,0.82);

  \draw[thick] (2.0,0) -- (2.0,0.82);
  \draw[thick] (7.4,0) -- (7.4,0.82);
  \draw[thick] (9.4,0) -- (9.4,0.82);

  \node at (1.0,0.41) {\footnotesize $b$};
  \node at (4.7,0.41) {\footnotesize $h$};
  \node at (8.4,0.41) {\footnotesize $b$};
  \node at (10.0,0.41) {\footnotesize $\alpha$};

  \draw[decorate,decoration={brace,amplitude=5pt},thick]
    (0,1.18) -- (9.4,1.18)
    node[midway,above=6pt] {\footnotesize maximal inverse Lyndon word $z=bhb$};

  \draw[decorate,decoration={brace,amplitude=4pt,mirror},thick]
    (0,-0.28) -- (10.6,-0.28)
    node[midway,below=5pt] {\footnotesize suffix $x_i$};

  \draw[decorate,decoration={brace,amplitude=4pt,mirror},thick]
    (7.4,-1.22) -- (10.6,-1.22)
    node[midway,below=5pt] {\footnotesize suffix $x_j$};

  \draw[->,thick] (0,-0.82) -- (0,-0.02);
  \node[font=\scriptsize,below] at (0,-0.82) {$i$};

  \draw[->,thick] (7.4,-0.82) -- (7.4,-0.02);
  \node[font=\scriptsize,below] at (7.4,-0.82) {$j=\mathit{next}_{-1}[i]$};

  \draw[decorate,decoration={brace,amplitude=4pt,mirror},thick]
    (0,-1.98) -- (2.0,-1.98)
    node[midway,below=5pt] {\footnotesize $\mathrm{lce}(i,j)=|b|$};

  \node[anchor=west,align=left,font=\footnotesize] at (11.05,0.35)
    {$\lambda^{-1}[i]$\\$=(j-i)+|b|$};

  \node[anchor=west,align=left,font=\scriptsize] at (11.05,-1.02)
    {start of the next greater suffix};

\end{tikzpicture}
\caption{Border correction for the inverse Lyndon array. If the maximal inverse Lyndon word starting at position $i$ has the form $z=bhb$, then the next greater suffix starts at the second occurrence of $b$. Consequently, $\mathrm{lce}(i,j)=|b|$ and $\lambda^{-1}[i]=(j-i)+|b|$.}
\label{fig:border-correction}
\end{figure}
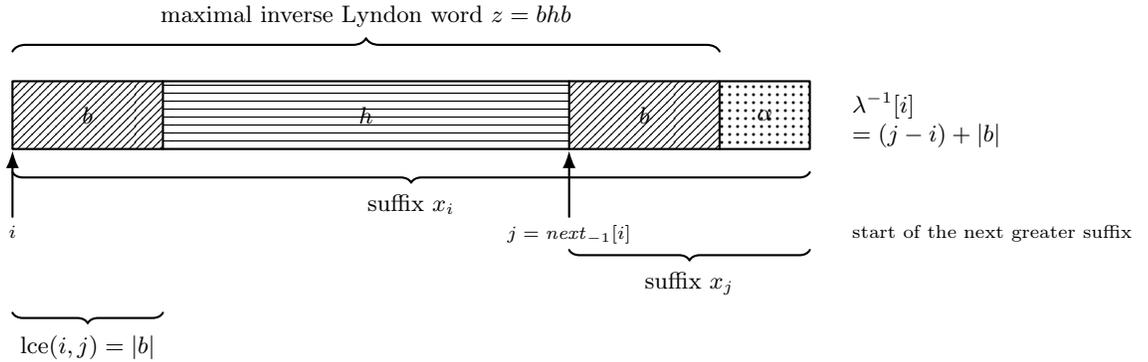

\begin{corollary}\label{cor:recover}
For every $i$,
\[
\lambda^{-1}[i] = \mathit{next}_{-1}[i] - i + \mathrm{lce}(i,\mathit{next}_{-1}[i]).
\]
\end{corollary}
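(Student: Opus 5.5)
The plan is to combine Lemma~\ref{lem:equiv-inv} and Lemma~\ref{lem:border-lce} directly. Fix $i$ and let $z$ be the maximal inverse Lyndon subword starting at $i$, so $|z| = \lambda^{-1}[i]$ by Definition~\ref{def:inv-arr}. By Lemma~\ref{lem:prefix-closed}, the set of lengths $m$ for which $x[i..i+m-1]$ is inverse Lyndon is an initial interval $[1,\lambda^{-1}[i]]$, so $z$ is well defined and is exactly the word of length $\lambda^{-1}[i]$. Setting $j = \mathit{next}_{-1}[i]$, Lemma~\ref{lem:equiv-inv} gives $j - i = \lambda^{-1}[i] - \mathrm{border}(z)$. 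Lemma~\ref{lem:border-lce} gives $\mathrm{lce}(i,j) = \mathrm{border}(z)$. Substituting the second identity into the first and rearranging yields
\[
\lambda^{-1}[i] = (j - i) + \mathrm{border}(z) = \mathit{next}_{-1}[i] - i + \mathrm{lce}(i,\mathit{next}_{-1}[i]).
\]

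The only real work is checking the boundary positions, where the conventions of Definition~\ref{def:ngs-pgs} are imposed rather than derived. For $i=1$, the sentinel $\#$ is the largest symbol, so every suffix of $x$ is smaller than $x_1$. Hence the whole framed word is inverse Lyndon, and $\lambda^{-1}[1]$ is the full length, matching $\mathit{next}_{-1}[1] - 1$. In that case $\mathrm{lce}(1,\cdot)$ must be read as $0$: position $n{+}1$ is beyond the word, and $\#$ occurs nowhere else. For $i = n$, the last position holds $\$$, which gives $\lambda^{-1}[n] = 1$ and $\mathit{next}_{-1}[n] - n = 1$ with zero LCE.

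For interior $i$, I will note that the proof of Lemma~\ref{lem:equiv-inv} writes $x_i = bhb\alpha$ with $\alpha$ non-empty. This holds because the terminal $\$$ exceeds every letter, so $z$ never reaches the final sentinel, and the existence of $j$ is guaranteed by that sentinel.

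I expect the main obstacle to be exactly this bookkeeping: making sure the lemmas apply uniformly and that the out-of-range value $n{+}1$ does not break the formula. The core identity is a one-line substitution.
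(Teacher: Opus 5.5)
Your proof is correct and matches the paper, which treats the corollary as an immediate substitution of Lemma~\ref{lem:border-lce} into Lemma~\ref{lem:equiv-inv}; your separate check of the sentinel positions $i=1$ and $i=n$ is correct bookkeeping that the paper leaves implicit (there $\mathit{next}_{-1}=n+1$ is a convention and the LCE with the empty suffix is $0$). One caveat on your check of the hypotheses of Lemma~\ref{lem:equiv-inv}: its proof also assumes $h\neq\varepsilon$, i.e.\ $|z|>2\,\mathrm{border}(z)$, which fails for overlapping borders (e.g.\ $z=aaa$ at position $2$ of $\#aaab\$$), but the lemma's statement still holds there ($\mathit{next}_{-1}[2]=3=2+3-2$), so your derivation is unaffected.
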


\begin{remark}[Why the border term is structural]\label{rem:structural}
The correction term in Corollary~\ref{cor:recover} is not a post-processing detail. Without Lemma~\ref{lem:border-lce}, the inverse array would require explicit border information for each maximal inverse Lyndon factor. That would introduce an additional computational layer not present in the standard case. The point of Lemma~\ref{lem:border-lce} is precisely that the border is already encoded by the same nearest-suffix LCE information computed by the algorithm.
\end{remark}

Thus, computing $\mathit{next}_{-1}$ together with the associated $\mathit{nlce}$ array suffices to recover $\lambda^{-1}$.

\begin{lemma}\label{lem:prev-inv-lyndon}
For all $i \in [2,n]$,  $x[\mathit{prev}_{-1}[i]..i{-}1]$ is an inverse Lyndon word.
\end{lemma}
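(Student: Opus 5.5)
Let $p = \mathit{prev}_{-1}[i]$ and $w = x[p..i{-}1]$. By Definition~\ref{def:ngs-pgs}, $x_p \succ x_i$ and $x_k \prec x_i$ for every $k \in (p,i)$. Hence $x_k \prec x_i \prec x_p$ for all such $k$. The plan is to fix an arbitrary $k\in(p,i)$, put $s = x[k..i{-}1]$ (a non-empty proper suffix of $w$), and show $s \prec w$. This is exactly Definition~\ref{def:inv-lyn}. The case $p=1$, where $x[1]=\#$ is the maximal symbol, is immediate, since every proper suffix of $w$ starts with a letter smaller than $\#$. So assume $p\ge 2$.

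Write $x_p = w\,x_i$ and $x_k = s\,x_i$, and compare $s$ and $w$ as words. There are three cases.
\begin{itemize}
\item If $s$ and $w$ differ at some position $t \le \min(|s|,|w|)=|s|$, the first mismatch already decides the comparison of $x_k$ and $x_p$. Since $x_k \prec x_p$, this gives $s \prec w$.
\item If $s = w$, then $|s|<|w|$ is impossible, so this case does not occur.
\item If $w$ is a proper prefix of $s$, then $|w|>|s|$ is also impossible, since $s$ is shorter.
\end{itemize}
So the only remaining possibility is that $s$ is a proper prefix of $w$. Because $s$ is also a suffix of $w$, it is then a border of $w$, and in the lexicographic order $s \prec w$ holds automatically because a prefix precedes any extension of it. That again gives the required inequality.

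Hence every case yields $s\prec w$, and $w$ is inverse Lyndon. The main point to check carefully is the border case: when $s$ is a proper prefix of $w$, the comparison of $x_k$ and $x_p$ carries no information about $s$ versus $w$. I expect this to be harmless precisely because Definition~\ref{def:inv-lyn} uses the plain order $\prec$, under which a proper prefix is always smaller. This is the inverse-setting difference from Lemma~\ref{lem:prev-lyndon}: there the analogous case must be excluded, which is why Lyndon words are unbordered, whereas here it is simply allowed. I would also confirm that the sentinel conventions keep all suffixes distinct, so that every inequality above is strict.
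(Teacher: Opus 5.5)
Your proof is correct and is essentially the paper's argument. Both start from $x_k \prec x_i \prec x_p$ and let a mismatch within the first $|s|$ positions decide $s \prec w$; otherwise $s$ is a proper prefix (a border) of $w$, and the prefix rule gives $s \prec w$. The separate case $p=1$ and the final sentinel check are unnecessary: the general argument already covers $p=1$, and distinct suffixes of one word are automatically distinct because they have different lengths.
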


\begin{proof}
Let $p = \mathit{prev}_{-1}[i]$ and $w = x[p..i{-}1]$. By definition of PGS, $x_p \succ x_i$ and $x_k \prec x_i$ for every $k \in (p,i)$. By transitivity, $x_p \succ x_k$ for every $k \in (p,i)$.

Let $s = x[k..i{-}1]$ be any proper non-empty suffix of $w$, with $k \in (p,i)$. Since $x_p \succ x_k$, let $m \ge 1$ be the first index where $x_p$ and $x_k$ differ; then $x[p{+}m{-}1] > x[k{+}m{-}1]$. If $m \le i{-}k = |s|$, the mismatch falls within both $w$ and $s$, so $w \succ s$. If $m > i{-}k$, then $s = x[k..i{-}1]$ is a proper prefix of $w$ (they agree on all $|s|$ positions of $s$), so $w \succ s$ by the prefix rule of lexicographic order. In both cases $w \succ s$, so $w$ is an inverse Lyndon word.
\end{proof}

\subsection{Combinatorial Properties of NGS/PGS Edges}

We now show that the main structural lemmas from the NSS/PSS setting remain valid in the inverse case. These lemmas also fit into the broader context of non-crossing LCE-based techniques on general ordered alphabets. Here, however, the non-crossing structure is induced specifically by NGS/PGS edges in the inverse Lyndon setting.

\begin{lemma}[Non-Crossing of NGS/PGS Edges]\label{lem:inv-no-crossing}
Let $l_1 < r_1$ and $l_2 < r_2$ be index pairs, each connected by an NGS or PGS edge. Then it is impossible to have $l_1 < l_2 < r_1 < r_2$.
\end{lemma}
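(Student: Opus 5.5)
The plan is to argue purely from the order relations that an NGS or PGS edge imposes on the suffixes in its span, and then derive a contradiction from the crossing configuration. The key observation is that every edge $(l,r)$, whether of type NGS ($r=\mathit{next}_{-1}[l]$) or PGS ($l=\mathit{prev}_{-1}[r]$), forces every suffix strictly inside the span to be smaller than at least one endpoint.

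\emph{Edge constraints.} For an NGS edge with $r=\mathit{next}_{-1}[l]$, the definition gives $x_r\succ x_l$ and $x_k\prec x_l$ for all $k\in(l,r)$; thus $x_r$ is the larger endpoint and every interior suffix is below $x_l$. For a PGS edge with $l=\mathit{prev}_{-1}[r]$, we get $x_l\succ x_r$ and $x_k\prec x_r$ for all $k\in(l,r)$; thus $x_l$ is the larger endpoint and every interior suffix is below $x_r$. In both cases, every interior suffix is strictly smaller than the \emph{smaller} endpoint, i.e.\ than $\min(x_l,x_r)$. The sentinel convention guarantees all suffixes are distinct, so these inequalities are strict and comparable.

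\emph{Contradiction.} Suppose $l_1<l_2<r_1<r_2$. Since $l_2\in(l_1,r_1)$, we have $x_{l_2}\prec\min(x_{l_1},x_{r_1})\preceq x_{r_1}$. Since $r_1\in(l_2,r_2)$, we have $x_{r_1}\prec\min(x_{l_2},x_{r_2})\preceq x_{l_2}$. Chaining gives $x_{l_2}\prec x_{r_1}\prec x_{l_2}$, which is impossible. This settles all four type combinations at once, so no case split on edge types is needed.

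\emph{Boundary cases.} The conventions $\mathit{next}_{-1}[1]=\mathit{next}_{-1}[n]=n{+}1$, $\mathit{prev}_{-1}[1]=0$ and $\mathit{prev}_{-1}[n]=1$ create edges whose endpoint may be a virtual index $0$ or $n{+}1$. These must be checked separately. We treat index $0$ as larger than everything and $n{+}1$ as smaller than everything in the relevant comparisons, or simply note that they are consistent with the framed word. Concretely, $x_1=\#\cdots$ is the maximum suffix and $x_n=\$$ is below every other suffix except $\#$-prefixed ones. The edge $(1,n)$ contains everything in its interior, so nothing can cross it. Edges ending at $n{+}1$ or starting at $0$ have no index beyond them. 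Verifying that these conventional edges respect the ``interior is below the smaller endpoint'' property, or otherwise trivially cannot cross, is the only fiddly step; the main argument above is short.
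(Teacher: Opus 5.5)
Your main argument is correct and is essentially the paper's proof. Both note that every suffix strictly inside an NGS or PGS span is smaller than both endpoints, so a crossing $l_1<l_2<r_1<r_2$ forces $x_{l_2}\prec x_{r_1}$ and $x_{r_1}\prec x_{l_2}$ simultaneously.

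The boundary paragraph, which the paper does not discuss, gets the details wrong and should be corrected.

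\textbf{The rank of $x_n$ is reversed.} In the inverse framing $\#>\$>a$, the suffix $x_n=\$$ lies \emph{above} every $x_k$ with $1<k<n$; it is the second largest suffix, not below them. This fact is what makes the convention $\mathit{prev}_{-1}[n]=1$ agree with the definition. It is also what forces $\mathit{next}_{-1}[k]\le n$ for all $1<k<n$.

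\textbf{The convention for $n{+}1$ is backwards.} The virtual index $n{+}1$ must be treated as larger than everything, just like $0$. If you read it as smaller than everything, the conventional NGS edges $(1,n{+}1)$ and $(n,n{+}1)$ violate your own interior property.

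\textbf{The edges ending at $n{+}1$ are only half handled.} Saying these edges have nothing beyond them excludes them only as the \emph{left} edge of a crossing. An edge $(k,n{+}1)$ with $1<k<n$ would cross $(1,n)$ as the right edge, contradicting your claim that nothing can cross $(1,n)$. Such edges are excluded only because every edge ending at $n{+}1$ starts at $1$ or $n$. That again relies on $x_n\succ x_k$ for $1<k<n$, the fact you stated backwards.

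Once both virtual endpoints are read as larger than every suffix, the interior property holds for every edge, the conventional ones included. Your main argument then covers the boundary edges with no separate case analysis.
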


\begin{proof}
Assume for contradiction that $l_1 < l_2 < r_1 < r_2$. Since $l_2 \in (l_1,r_1)$ and $l_1,r_1$ are connected by an NGS or PGS edge, every suffix starting in $(l_1,r_1)$ is lexicographically smaller than both endpoints, hence $x_{l_2} \prec x_{r_1}$. On the other hand, since $r_1 \in (l_2,r_2)$ and $l_2,r_2$ are connected, we get $x_{l_2} \succ x_{r_1}$. This contradiction proves the claim.
\end{proof}

\begin{lemma}[Chain Iteration for NGS/PGS]\label{lem:inv-chain}
For any $r \in [2,n]$:
\begin{enumerate}
\item $\mathit{prev}_{-1}[r] = \mathit{prev}_{-1}^*[r{-}1]$;
\item $\mathit{next}_{-1}[l] = r$ if and only if $l = \mathit{prev}_{-1}^*[r{-}1]$ and $l > \mathit{prev}_{-1}[r]$.
\end{enumerate}
\end{lemma}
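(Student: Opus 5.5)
We prove this exactly as Ellert's chain lemma (Lemma~\ref{lem:chain}), with every comparison reversed. The tools are the sentinel convention (all suffixes are distinct, and $x_1$ is the maximum) and the non-crossing property of Lemma~\ref{lem:inv-no-crossing}. Throughout, the PGS chain from $r{-}1$ is the sequence $r{-}1,\ \mathit{prev}_{-1}[r{-}1],\ \mathit{prev}_{-1}^2[r{-}1],\dots$. It strictly decreases and ends at $1$ (then $0$), because $x_1$ is greater than every other suffix. We first prove a \emph{characterization} of the chain: an index $l<r$ lies on the PGS chain from $r{-}1$ if and only if $x_l \succ x_k$ for every $k\in(l,r)$.

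\textbf{Characterization.} For the forward direction we induct along the chain. The base case $l=r{-}1$ is vacuous. If $l$ is on the chain with the property, let $l'=\mathit{prev}_{-1}[l]$. By definition of PGS, $x_{l'}\succ x_l$ and $x_{l'}\succ x_k$ for $k\in(l',l)$. By transitivity with the inductive hypothesis, $x_{l'}\succ x_k$ for all $k\in(l',r)$. For the converse, suppose $x_l\succ x_k$ for all $k\in(l,r)$. Let $c$ be the smallest chain element that is $\ge l$, and suppose $c\neq l$. Then $c'=\mathit{prev}_{-1}[c]<l<c$. Since $x_l\succ x_c$, this contradicts the maximality in the definition of $\mathit{prev}_{-1}[c]$. (If $c=r{-}1$ there is nothing to do, since then $l=r{-}1$ lies on the chain.)

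\textbf{Part 1.} Let $p=\mathit{prev}_{-1}[r]$. By definition, $x_k\prec x_r\prec x_p$ for all $k\in(p,r)$. Hence $x_p\succ x_k$ for every $k\in(p,r)$, and by the characterization $p$ lies on the chain. This is what $\mathit{prev}_{-1}[r]=\mathit{prev}_{-1}^*[r{-}1]$ means. More precisely, $p$ is the first chain element whose suffix exceeds $x_r$. Every chain element $c\in(p,r)$ satisfies $x_c\prec x_r$, and chain elements are exactly the ``record'' positions when scanning leftward.

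\textbf{Part 2, forward direction.} Suppose $\mathit{next}_{-1}[l]=r$. Then $x_k\prec x_l$ for all $k\in(l,r)$, so $l$ is on the chain by the characterization. Moreover $x_l\prec x_r$. If we had $l\le p=\mathit{prev}_{-1}[r]$, then either $l=p$, giving $x_l\succ x_r$, a contradiction; or $l<p<r$ with $x_p\succ x_r\succ x_l$, contradicting the minimality in the definition of $\mathit{next}_{-1}[l]$. Hence $l>p$.

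\textbf{Part 2, converse.} Suppose $l$ is on the chain and $l>p$. The characterization gives $x_k\prec x_l$ for $k\in(l,r)$. Since $l\in(p,r)$, the definition of $\mathit{prev}_{-1}[r]$ gives $x_l\prec x_r$. So $r$ is the first index after $l$ with a greater suffix, i.e.\ $\mathit{next}_{-1}[l]=r$.

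The only delicate step is the converse of the characterization, i.e.\ that the chain does not skip any left-to-right-maximum position. The minimality argument above handles it. The boundary conventions ($\mathit{prev}_{-1}[1]=0$, and $\mathit{next}_{-1}$ at $n$ and $n{+}1$) must be checked against the sentinels, which is routine. Non-crossing (Lemma~\ref{lem:inv-no-crossing}) can serve as an alternative justification for the skipping argument.
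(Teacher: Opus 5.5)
Your proof is correct, but it takes a different route from the paper.

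\textbf{The paper's route.} It proves Part~1 and the forward direction of Part~2 from the non-crossing lemma (Lemma~\ref{lem:inv-no-crossing}). If $\mathit{prev}_{-1}[r]$, or an $l$ with $\mathit{next}_{-1}[l]=r$, were not on the chain, some chain link $\mathit{prev}_{-1}[r'] \to r'$ would straddle it. This yields a forbidden crossing $\mathit{prev}_{-1}[r'] < \mathit{prev}_{-1}[r] < r' < r$, respectively $\mathit{prev}_{-1}[r'] < l < r' < r$. For the converse of Part~2, the paper invokes ``the chain property'' $x_k \prec x_l$ for $k\in(l,r)$ without proving it.

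\textbf{Your route.} You first characterize the chain from $r{-}1$: it is exactly the set of positions $l<r$ whose suffix exceeds every suffix starting in $(l,r)$. All three claims then follow from the definitions of $\mathit{next}_{-1}$ and $\mathit{prev}_{-1}$, transitivity, and distinctness of suffixes, with no appeal to non-crossing.

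\textbf{How the two compare.} The forward half of your characterization is precisely the unproved chain property the paper relies on. Your converse half is a direct, local form of the skipping argument: it uses only the maximality of a single $\mathit{prev}_{-1}[c]$ and needs no edge at $l$. So your version is more self-contained and closes the small gap in the paper's converse. The paper's version is shorter and stresses that non-crossing is the one structural fact behind chain iteration, as in Ellert's framework.

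Your boundary remark is also sound. Since $x_1$ is the maximum, $\mathit{prev}_{-1}[r]\ge 1$ for every $r\ge 2$. Hence $l>\mathit{prev}_{-1}[r]$ rules out $l=1$, and the convention $\mathit{next}_{-1}[1]=n+1$ never conflicts with your converse.
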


\begin{proof}
For the first statement, suppose $\mathit{prev}_{-1}[r] \neq \mathit{prev}_{-1}^*[r{-}1]$. Then there exists $r' = \mathit{prev}_{-1}^*[r{-}1]$ such that $\mathit{prev}_{-1}[r] \in (\mathit{prev}_{-1}[r'],r')$, giving
\[
\mathit{prev}_{-1}[r'] < \mathit{prev}_{-1}[r] < r' < r,
\]
which contradicts Lemma~\ref{lem:inv-no-crossing}.

For the second statement, assume first that $\mathit{next}_{-1}[l] = r$. Then all suffixes starting in $(l,r)$ are smaller than $x_r$, so $\mathit{prev}_{-1}[r] \notin [l,r)$ and therefore $\mathit{prev}_{-1}[r] < l$. If $l \neq \mathit{prev}_{-1}^*[r{-}1]$, then there exists $r' = \mathit{prev}_{-1}^*[r{-}1]$ such that
\[
\mathit{prev}_{-1}[r'] < l < r' < r,
\]
again contradicting Lemma~\ref{lem:inv-no-crossing}.

Conversely, assume that $\mathit{prev}_{-1}[r] < l$ and $l = \mathit{prev}_{-1}^*[r{-}1]$. Since $l \in (\mathit{prev}_{-1}[r],r)$, we have $x_l \prec x_r$. Moreover, the chain property implies $x_k \prec x_l$ for every $k \in (l,r)$. Therefore $r$ is the first position to the right of $l$ with a greater suffix, namely $\mathit{next}_{-1}[l] = r$.
\end{proof}

\begin{lemma}[LCE Acceleration for NGS/PGS]\label{lem:inv-lce-accel}
Let $l = \mathit{prev}_{-1}[k]$ and $r = \mathit{next}_{-1}[k]$. Then:
\begin{enumerate}
\item if $\mathrm{lce}(l,k) = \mathrm{lce}(k,r)$, then $\mathrm{lce}(l,r) \ge \mathrm{lce}(k,r)$, and either $\mathit{prev}_{-1}[r] = l$ or $\mathit{next}_{-1}[l] = r$;
\item if $\mathrm{lce}(l,k) < \mathrm{lce}(k,r)$, then $\mathrm{lce}(l,r) = \mathrm{lce}(l,k)$ and $\mathit{prev}_{-1}[r] = l$;
\item if $\mathrm{lce}(l,k) > \mathrm{lce}(k,r)$, then $\mathrm{lce}(l,r) = \mathrm{lce}(k,r)$ and $\mathit{next}_{-1}[l] = r$.
\end{enumerate}
\end{lemma}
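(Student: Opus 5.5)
We plan to argue directly from the ordering relations forced by the two edges at $k$, and to use the standard fact that lexicographic comparisons are decided at the first mismatch. By definition of PGS and NGS we have $x_l \succ x_k$ and $x_r \succ x_k$. Every suffix starting in $(l,k)$ is smaller than $x_k$, and every suffix starting in $(k,r)$ is smaller than $x_k$. Set $a=\mathrm{lce}(l,k)$ and $c=\mathrm{lce}(k,r)$.

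The sentinel conventions ensure that all suffixes are distinct. Since $x_l \succ x_k$, either $x_k$ is a proper prefix of $x_l$ or $x_l[a+1] > x_k[a+1]$. The prefix case cannot occur, because every suffix ends with $\$$ and the sentinel appears only once, so two distinct suffixes always have a genuine mismatch. Hence $x_l[a+1]>x_k[a+1]$, and symmetrically $x_r[c+1]>x_k[c+1]$.

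\textbf{Cases 2 and 3.} These are the classical ``transitivity of LCE'' computations.

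If $a<c$, then $x_l$ and $x_r$ agree with $x_k$ on the first $a$ symbols. At position $a+1$ we have $x_r[a+1]=x_k[a+1]<x_l[a+1]$. Therefore $\mathrm{lce}(l,r)=a$ and $x_l \succ x_r$.

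If $a>c$, then symmetrically $\mathrm{lce}(l,r)=c$. Now $x_l[c+1]=x_k[c+1]<x_r[c+1]$, so $x_r \succ x_l$.

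If $a=c$, both suffixes agree with $x_k$ for $a$ symbols, so $\mathrm{lce}(l,r)\ge c$. Since $x_l\neq x_r$, exactly one of $x_l\succ x_r$ or $x_r\succ x_l$ holds.

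\textbf{Identifying the edge.} The remaining task is to turn the order between $x_l$ and $x_r$ into the claimed edge. Suppose $x_l\succ x_r$. We claim that $\mathit{prev}_{-1}[r]=l$, which requires every $j\in(l,r)$ to satisfy $x_j\prec x_r$. For $j=k$ this holds by the NGS definition. For $j\in(k,r)$ it holds because $r=\mathit{next}_{-1}[k]$, so $x_j\prec x_k\prec x_r$. For $j\in(l,k)$ we have $x_j\prec x_k\prec x_r$ as well. Together with $x_l\succ x_r$, this gives $\mathit{prev}_{-1}[r]=l$.

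Symmetrically, if $x_r\succ x_l$, then for $j\in(l,r)$ we get $x_j\preceq x_k\prec x_l$. So $r$ is the first position after $l$ whose suffix is greater than $x_l$, that is, $\mathit{next}_{-1}[l]=r$. This gives cases 2 and 3, and in case 1 it yields the disjunction.

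We also need to handle the boundary conventions. If $l=0$ or $r=n+1$, the statement is understood with those conventions, or restricted to valid indices. We will note this briefly, possibly using the sentinel $\#$ at position $1$ to guarantee that $l\ge1$.

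\textbf{Main obstacle.} We expect the main subtlety to be the claim that the first mismatch really exists, that is, excluding the proper-prefix case. Without it, the inequalities at position $a+1$ or $c+1$ might refer to missing symbols. We will handle this via the unique terminal sentinel $\$$, which makes every suffix end with a character that occurs nowhere else, so no suffix is a proper prefix of another. Everything else is a direct analogue of Lemma~\ref{lem:lce-accel} with all comparisons reversed.
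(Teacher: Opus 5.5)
Your proof is correct and follows essentially the same route as the paper. You compare $x_l$ and $x_r$ at the first mismatch through $x_k$ to get $\mathrm{lce}(l,r)$ and the order of $x_l, x_r$, then use the sandwich $x_m \prec x_k \prec x_l, x_r$ for $m \in (l,r)\setminus\{k\}$ to identify the edge. Your appeal to the unique terminal sentinel \$ to exclude the proper-prefix case, which makes $x[l+c] > x[k+c]$ meaningful, and your note on the boundary conventions $l=0$, $r=n+1$, both make explicit details the paper leaves tacit.
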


\begin{proof}
For the first case, since $l = \mathit{prev}_{-1}[k]$ and $r = \mathit{next}_{-1}[k]$, we have $x_l \succ x_k \succ x_m$ and $x_r \succ x_k \succ x_m$ for all $m \in (l,r) \setminus \{k\}$. Since $\mathrm{lce}(l,k) = \mathrm{lce}(k,r)$, it follows that $\mathrm{lce}(l,r) \ge \mathrm{lce}(k,r)$. If $x_l \succ x_r$, then $x_l \succ x_r \succ x_m$ for all such $m$, so $\mathit{prev}_{-1}[r] = l$. By symmetry, if $x_r \succ x_l$, then $\mathit{next}_{-1}[l] = r$.

For the second case, let $c = \mathrm{lce}(l,k)$. Since $x_l \succ x_k$, we have $x[l+c] > x[k+c]$. Because $\mathrm{lce}(l,k) < \mathrm{lce}(k,r)$, the suffix $x_r$ shares a prefix of length at least $c+1$ with $x_k$, hence $x[r+c] = x[k+c]$. Thus $x_l$ and $x_r$ agree for $c$ positions and differ at the next one with $x[l+c] > x[r+c]$, so $x_l \succ x_r$ and $\mathrm{lce}(l,r) = c$. Since $x_r \succ x_m$ for all $m \in (l,r)$, it follows that $\mathit{prev}_{-1}[r] = l$.

The third case is symmetric, yielding $x_r \succ x_l$, $\mathrm{lce}(l,r) = \mathrm{lce}(k,r)$, and $\mathit{next}_{-1}[l] = r$.
\end{proof}

\subsection{The LCE-NGS Algorithm}

Lemmas~\ref{lem:inv-chain} and~\ref{lem:inv-lce-accel} yield Algorithm~\ref{alg:lce-ngs}, which mirrors the structure of the standard LCE-NSS algorithm~\cite{lyndon-simple}. The key difference is the direction of the comparison on line~8: in the inverse setting we test whether $x[\ell+m] < x[r+m]$, because we are looking for the first suffix to the right that is lexicographically greater.

\begin{algorithm}[t]
\caption{LCE-NGS for the inverse Lyndon array}\label{alg:lce-ngs}
\DontPrintSemicolon
\KwIn{word $x[1..n]$ in inverse sentinel mode}
\KwOut{$\mathit{next}_{-1}$, $\mathit{prev}_{-1}$, $\mathit{nlce}$, $\mathit{plce}$, $\lambda^{-1}$}
$\mathit{prev}_{-1}[1]\gets 0$; $\mathit{next}_{-1}[n]\gets n+1$; $\mathit{plce}[1]\gets 0$\;
\For{$r\gets 2$ \KwTo $n$}{
  $\ell\gets r-1$; $m\gets \textsc{Smart-Lce}(\ell,r,0)$\;
  \While{$x[\ell+m] < x[r+m]$}{
    $\mathit{next}_{-1}[\ell]\gets r$; $\mathit{nlce}[\ell]\gets m$\;
    \If{$m=\mathit{plce}[\ell]$}{$m\gets \textsc{Smart-Lce}(\mathit{prev}_{-1}[\ell],r,m)$\;}
    \ElseIf{$m>\mathit{plce}[\ell]$}{$m\gets \mathit{plce}[\ell]$\;}
    $\ell\gets \mathit{prev}_{-1}[\ell]$\;
  }
  $\mathit{prev}_{-1}[r]\gets \ell$; $\mathit{plce}[r]\gets m$\;
}
\For{$i\gets 1$ \KwTo $n$}{$\lambda^{-1}[i]\gets \mathit{next}_{-1}[i]-i+\mathit{nlce}[i]$\;}
\end{algorithm}

The procedure \textsc{Smart-Lce} is the same amortized primitive used in the standard setting~\cite{lyndon-simple}, but here it is queried only on NGS/PGS candidates. For self-containment, we recall the two invariants that are used in the proof of Theorem~\ref{thm:linear}. First, after each explicit scan, $\mathit{rhs}$ is the rightmost text position inspected so far. Second, whenever an edge value has already been fixed, the corresponding stored quantity in $\mathit{nlce}$ or $\mathit{plce}$ is exactly the required LCE for that edge. Hence, if a query $(\ell,r,m)$ satisfies $r+m<\mathit{rhs}$, then the answer lies entirely inside a previously certified window and can be returned in $O(1)$ time from the stored edge information selected by Lemma~\ref{lem:inv-lce-accel}. Otherwise, \textsc{Smart-Lce} extends explicitly, increases $\mathit{rhs}$, and stores the resulting LCE value. This is the only place where new character comparisons are performed.

\begin{theorem}\label{thm:linear}
Algorithm~\ref{alg:lce-ngs} computes the inverse Lyndon array $\lambda^{-1}[1..n]$ in $O(n)$ time and $O(n)$ space on general ordered alphabets.
\end{theorem}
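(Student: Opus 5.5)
The proof has two parts: correctness, then the $O(n)$ bound. For correctness I will maintain, by induction on $r$, the invariant that after iteration $r$ of the outer loop, $\mathit{prev}_{-1}[k]$ and $\mathit{plce}[k]=\mathrm{lce}(\mathit{prev}_{-1}[k],k)$ are correct for all $k\le r$. In addition, for every $l<r$ with $\mathit{next}_{-1}[l]\le r$, the algorithm must have set $\mathit{next}_{-1}[l]$ and $\mathit{nlce}[l]=\mathrm{lce}(l,\mathit{next}_{-1}[l])$ correctly. Any $l$ never assigned keeps the default $n+1$; the sentinel $\$$ forces every position $<n$ to be assigned.

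In iteration $r$ the inner loop walks the PGS chain $r-1,\mathit{prev}_{-1}[r-1],\dots$. By Lemma~\ref{lem:inv-chain}(1), $\mathit{prev}_{-1}[r]$ lies on this chain. By Lemma~\ref{lem:inv-chain}(2), the positions $l$ with $\mathit{next}_{-1}[l]=r$ are exactly the chain elements strictly above $\mathit{prev}_{-1}[r]$. So it suffices to show that the loop test $x[\ell+m]<x[r+m]$ decides correctly whether $x_r\succ x_\ell$. This holds provided $m=\mathrm{lce}(\ell,r)$ holds as a loop invariant.

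I will prove that invariant from Lemma~\ref{lem:inv-lce-accel}, applied with $k=\ell$, $l=\mathit{prev}_{-1}[\ell]$, and $r=\mathit{next}_{-1}[\ell]$, using $\mathit{plce}[\ell]=\mathrm{lce}(l,k)$ and $m=\mathrm{lce}(k,r)$.
\begin{itemize}
\item If $m>\mathit{plce}[\ell]$, case 3 gives $\mathrm{lce}(l,r)=\mathit{plce}[\ell]$, which is exactly the assignment made.
\item If $m<\mathit{plce}[\ell]$, case 2 gives $\mathrm{lce}(l,r)=m$, so $m$ is kept.
\item If the two are equal, case 1 gives $\mathrm{lce}(l,r)\ge m$, so extending from offset $m$ via \textsc{Smart-Lce} is valid.
\end{itemize}
When the loop stops at $\ell=\mathit{prev}_{-1}[r]$, the current $m$ is $\mathrm{lce}(\ell,r)$, which justifies storing it in $\mathit{plce}[r]$. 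The final formula for $\lambda^{-1}$ is then Corollary~\ref{cor:recover}, since $\mathit{nlce}[i]=\mathrm{lce}(i,\mathit{next}_{-1}[i])$.

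For the running time, each inner-loop iteration permanently fixes one value $\mathit{next}_{-1}[\ell]$, so there are at most $n$ inner iterations in total. The outer loop runs $n-1$ times, so bookkeeping is $O(n)$ apart from the character comparisons performed inside \textsc{Smart-Lce}. Each call to \textsc{Smart-Lce} costs $O(1)$ plus the number of successful explicit comparisons. I will charge those comparisons to increases of $\mathit{rhs}$. This requires showing that a query never rescans positions left of $\mathit{rhs}$, so that the total is at most $n$ plus $O(1)$ per call.

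The main obstacle is justifying that short-cut. When $r+m<\mathit{rhs}$, the answer must be readable in $O(1)$ from stored edge values. I would mirror Ellert's argument. The scan that set $\mathit{rhs}$ compared some pair $(l',r')$ with $r'\le r$, and the window $[r',\mathit{rhs})$ repeats an earlier window. Lemma~\ref{lem:inv-no-crossing} forces the relevant NGS/PGS edges to nest, so the LCE at the current query equals a stored $\mathit{nlce}$ or $\mathit{plce}$ value, exactly as in Lemma~\ref{lem:inv-lce-accel}. Only the comparison direction differs from the standard setting, and the non-crossing and acceleration lemmas have already been re-proved for NGS/PGS.

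Space is $O(n)$, since only a constant number of arrays of length $n$ plus $\mathit{rhs}$ are kept. Only pairwise symbol comparisons are used, so the bound holds over general ordered alphabets.
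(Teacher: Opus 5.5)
Your proposal follows the paper's proof. Correctness comes from chain iteration, LCE acceleration and Corollary~\ref{cor:recover}, where you usefully make the loop invariant $m=\mathrm{lce}(\ell,r)$ explicit while the paper only cites the lemmas; linearity comes from charging explicit comparisons to the monotone frontier $\mathit{rhs}$, with the in-window \textsc{Smart-Lce} shortcut justified, as in the paper, only by appeal to Ellert's argument. One small slip: in your case analysis the labels of Lemma~\ref{lem:inv-lce-accel} are swapped ($m>\mathit{plce}[\ell]$ is case~2 and $m<\mathit{plce}[\ell]$ is case~3), although the conclusions you draw in each case are correct.
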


\begin{proof}
Correctness follows from Lemmas~\ref{lem:equiv-inv}, \ref{lem:border-lce}, \ref{lem:inv-no-crossing}, \ref{lem:inv-chain}, and~\ref{lem:inv-lce-accel}. The additional point to justify is that the border correction does not create extra rescanning.

By Lemma~\ref{lem:border-lce}, whenever $\mathit{next}_{-1}[\ell]=r$ is fixed, the same value $\mathit{nlce}[\ell]=\mathrm{lce}(\ell,r)$ is both the edge LCE needed by the algorithm and the border correction needed later for $\lambda^{-1}[\ell]$. Therefore the algorithm never computes borders explicitly and never performs a separate pass for border information.

Moreover, if \textsc{Smart-Lce} is called with $r+m<\mathit{rhs}$, then the query lies entirely inside a region already certified by a previous explicit scan. In that case, Lemma~\ref{lem:inv-lce-accel} and the stored $\mathit{nlce}$ or $\mathit{plce}$ values suffice to answer the query without rescanning characters inside that window. Thus borders cannot trigger hidden rescanning there. Every explicit character comparison strictly advances the global frontier $\mathit{rhs}$, which is monotone and never exceeds $n$. Hence the total number of explicit comparisons is $O(n)$.

The outer loop performs $n-1$ iterations. Each index receives its final $\mathit{next}_{-1}$ value once, each stored LCE value is written once, and the final recovery formula
\[
\lambda^{-1}[i] = \mathit{next}_{-1}[i]-i+\mathit{nlce}[i]
\]
is applied once per position. Therefore the total running time is $O(n)$ and the space usage is $O(n)$.
\end{proof}

\section{Experimental Evaluation}\label{sec:experiments}

We implemented LCE-NSS for $\lambda$ and LCE-NGS for $\lambda^{-1}$ in C++17. The timing runs reported here were produced by the benchmark pipeline with \texttt{g++ -std=c++17 -O2}. We measured wall-clock construction time and, when relevant, report the recovery pass separately. All experiments were run on a Dell Precision 7960 Tower under Ubuntu Linux, with an Intel Xeon w5-3425 CPU and 128\,GB of ECC RAM. Correctness was checked against brute force on small and medium inputs and on crafted edge cases.

We used random words over alphabets of size $\sigma\in\{2,4,26\}$, structured synthetic families, and real texts from Pizza\&Chili and the Large Canterbury Corpus~\cite{pizza-chili,canterbury-corpus,canterbury-desc}. The profiling-enabled implementation also counted explicit character comparisons, LCE reuse hits inside the certified \textsc{Smart-Lce} window, and explicit extension calls.

Table~\ref{tab:results} reports the random-input timings. On instances with $n\ge 5\times 10^4$, the mean ratio $\textsc{LCE-NGS}/\textsc{LCE-NSS}$ is $1.0060$, the median ratio is $0.9990$, and the observed range is $[0.9882,1.0939]$, which shows that the inverse construction matches the standard one closely across all alphabet sizes.

\begin{table}[t]
\caption{Core construction time ($\mu$s) on random words.}
\label{tab:results}
\centering
\resizebox{\textwidth}{!}{\begin{tabular}{@{}r@{\quad}rr@{\qquad}rr@{\qquad}rr@{}}
\toprule
 & \multicolumn{2}{c}{$\sigma = 2$} & \multicolumn{2}{c}{$\sigma = 4$} & \multicolumn{2}{c}{$\sigma = 26$} \\
\cmidrule(lr){2-3}\cmidrule(lr){4-5}\cmidrule(lr){6-7}
$n$ & LCE-NSS & LCE-NGS & LCE-NSS & LCE-NGS & LCE-NSS & LCE-NGS \\
\midrule
$10^3$ & 15.4 & 15.0 & 15.8 & 15.4 & 15.6 & 16.0 \\
$5\times 10^3$ & 79.6 & 84.2 & 95.6 & 101.2 & 94.0 & 92.6 \\
$10^4$ & 171.4 & 179.0 & 205.4 & 207.0 & 194.0 & 190.6 \\
$5\!\times\!10^4$ & 948.2 & 1\,037.2 & 1\,062.0 & 1\,059.2 & 1\,006.6 & 1\,000.6 \\
$10^5$ & 1\,974.2 & 1\,973.4 & 2\,187.0 & 2\,162.4 & 2\,060.6 & 2\,053.8 \\
$5\!\times\!10^5$ & 9\,197.8 & 9\,416.2 & 10\,503.2 & 10\,543.2 & 10\,087.8 & 9\,981.4 \\
$10^6$ & 18\,060.0 & 18\,320.2 & 20\,972.4 & 20\,955.0 & 20\,121.8 & 19\,919.2 \\
$2\!\times\!10^6$ & 35\,778.4 & 36\,154.0 & 41\,893.6 & 41\,840.6 & 40\,219.0 & 39\,744.4 \\
$5\!\times\!10^6$ & 89\,178.6 & 90\,033.6 & 102\,314.0 & 104\,479.2 & 100\,569.0 & 99\,436.6 \\
\bottomrule
\end{tabular}}
\end{table}

Table~\ref{tab:real-results} reports the results obtained on real files. Across the five corpora, the mean ratio is $1.0019$, the median ratio is $1.0024$, and the observed range is $[0.9665,1.0325]$. The structured synthetic families show the same practical behavior. Over all structured instances with $n\ge 5\times 10^4$, the mean ratio is $0.9852$, the median ratio is $0.9981$, and the observed range is $[0.8848,1.0874]$. Thus the inverse kernel remains close to the standard one across random, structured, and real inputs.

\begin{table}[t]
\caption{Core construction time ($\mu$s) on real corpora.}
\label{tab:real-results}
\centering
\begin{tabular}{@{}lrrrr@{}}
\toprule
File & $n$ & LCE-NSS & LCE-NGS & Ratio \\
\midrule
\texttt{english.txt} & 5\,000\,000 & 98\,850.6 & 98\,846.6 & 1.0000 \\
\texttt{dna.txt} & 5\,000\,000 & 104\,074.2 & 104\,328.6 & 1.0024 \\
\texttt{bible.txt} & 4\,047\,392 & 75\,372.2 & 77\,819.2 & 1.0325 \\
\texttt{e.coli} & 4\,638\,690 & 96\,704.0 & 93\,460.0 & 0.9665 \\
\texttt{world192.txt} & 2\,473\,400 & 46\,763.6 & 47\,144.6 & 1.0081 \\
\bottomrule
\end{tabular}
\end{table}

To stress the border-correction path, we additionally profiled long-border families while recording explicit character comparisons, LCE reuse hits, and explicit extension calls inside \textsc{Smart-Lce}. Each instance contains a repeated prefix-suffix border occupying either $25\%$ or $40\%$ of the full length. The timing ratios on these border-heavy inputs remain close to one, mean $1.0013$, median $1.0031$, and range $[0.9680,1.0257]$. Table~\ref{tab:border-counters} shows that all counters remain linear in $n$. In particular, from $10^5$ to $10^6$, both explicit comparison counts and explicit extension counts grow by about $10\times$ in both families, which matches Theorem~\ref{thm:linear}. Deep borders do affect constants, but they do not trigger superlinear rescanning.

\begin{table}[t]
\caption{Border-heavy profiling counters. Values are averages over three runs.}
\label{tab:border-counters}
\centering
\scriptsize
\begin{tabular}{@{}llrrrrrr@{}}
\toprule
Family & $n$ & NSS cmp. & NGS cmp. & NSS reuse & NGS reuse & NSS ext. & NGS ext. \\
\midrule
25\% border & 100\,000    &   213\,984 &   213\,427 &  29\,347 &  29\,258 &   148\,890 &   148\,233 \\
40\% border & 100\,000    &   200\,510 &   208\,599 &  44\,719 &  35\,317 &   131\,483 &   141\,882 \\
25\% border & 500\,000    &   945\,829 & 1\,039\,872 & 294\,036 & 179\,692 &   582\,948 &   705\,698 \\
40\% border & 500\,000    &   947\,176 &   863\,646 & 292\,529 & 393\,730 &   584\,150 &   475\,202 \\
25\% border & 1\,000\,000 & 2\,134\,164 & 2\,134\,660 & 293\,205 & 293\,234 & 1\,481\,797 & 1\,483\,100 \\
40\% border & 1\,000\,000 & 2\,135\,385 & 2\,123\,841 & 292\,240 & 306\,666 & 1\,483\,741 & 1\,468\,012 \\
\bottomrule
\end{tabular}
\end{table}

\section{Conclusions and Suffix-Sorting Perspectives}\label{sec:suffix}

We introduced the inverse Lyndon array and characterized it in terms of the next greater suffix array and a border correction term. We also showed that the combinatorial ingredients behind the NSS/PSS framework transfer to the NGS/PGS setting, yielding a linear-time construction on general ordered alphabets. The central technical point is that borders do not create hidden rescanning costs: the required correction term is already encoded by LCE values on nearest-greater-suffix edges and is absorbed by the same amortized machinery as in the standard case. Experimentally, the inverse construction preserves the same practical linear-time profile as the standard one.

A natural perspective is suffix sorting with the aid of the standard and inverse Lyndon arrays. It is well known that the standard Lyndon array supports suffix sorting through the compatibility property
\[
w_\lambda(i) \prec w_\lambda(j) \Longrightarrow x_i \prec x_j,
\]
where $w_\lambda(i)$ denotes the maximal Lyndon subword starting at~$i$~\cite{baier:LIPIcs.CPM.2016.23,accelerate-suffix-sorting}.

By contrast, the inverse case does not admit a direct analogue.

\begin{proposition}[No direct inverse compatibility]\label{prop:no-direct-inverse}
There is no direct inverse analogue of the standard compatibility property for maximal inverse Lyndon factors. In particular, it is not true in general that
\[
w_{\lambda^{-1}}(i) \prec w_{\lambda^{-1}}(j) \Longrightarrow x_i \prec x_j.
\]
\end{proposition}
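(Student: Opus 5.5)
The plan is to prove the proposition with one explicit counterexample. It suffices to find positions $i,j$ with $w_{\lambda^{-1}}(i) \prec w_{\lambda^{-1}}(j)$ and yet $x_j \prec x_i$. Because inverse Lyndon words may be bordered and are prefix-closed (Lemma~\ref{lem:prefix-closed}), the natural mechanism is the prefix case of $\prec$. We want $w_{\lambda^{-1}}(i)$ to be a proper prefix of $w_{\lambda^{-1}}(j)$, with the factor at $i$ cut short because the next letter is large. That same large letter then makes $x_i$ exceed $x_j$ at the first mismatch after the shared prefix.

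Concretely, I take $x = \#\,babc\,\$$ over $a<b<c$ with the inverse sentinels $\# > \$ > c$, and work with the positions of the core word $babc$.

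\emph{Position $1$.} I check that $bab$ is inverse Lyndon. Its proper suffixes $b$ and $ab$ are both smaller than $bab$: the first is a proper prefix of it, and the second starts with $a<b$. Extending to $babc$ fails, because the suffix $c$ exceeds $babc$. Hence $w_{\lambda^{-1}}(1) = bab$.

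\emph{Position $3$.} The letter $b$ is trivially inverse Lyndon. The word $bc$ is not, since $c \succ bc$. Hence $w_{\lambda^{-1}}(3) = b$.

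These two factors give $w_{\lambda^{-1}}(3) = b \prec bab = w_{\lambda^{-1}}(1)$ by the prefix rule. The suffixes compare the other way: $x_3 = bc\$$ and $x_1 = babc\$$ agree on the first letter, and then $c > a$. So $x_3 \succ x_1$, which violates the implication with $i=3$, $j=1$.

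The only step needing care is the sentinel convention, which must not alter the computed factors. The appended $\$$ exceeds every letter, so any factor that reached it would fail to be inverse Lyndon for the same reason $c$ blocks extension. Both maximal factors above stop strictly before $c$ in any case. Shifting indices by one to account for $\#$ changes nothing.

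I would close with a remark on why this is generic. Whenever a maximal inverse Lyndon factor $u$ starting at $i$ is a proper prefix of the factor at $j$, the factor at $i$ is cut short by a letter exceeding the continuation at $j$. This forces $x_i \succ x_j$, so the compatibility property fails exactly in the prefix configuration, which the bordered structure of inverse Lyndon words makes unavoidable.
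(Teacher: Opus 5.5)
Your counterexample is correct and uses the same idea as the paper, which takes $x=babacbabaa$ with $w_{\lambda^{-1}}(1)=baba$ a proper prefix of $w_{\lambda^{-1}}(6)=babaa$, where the shorter factor is cut short by the large letter $c$, so that $x_1\succ x_6$; your word $babc$ is a smaller instance of the same prefix mechanism (with $i>j$, which is allowed because the implication is not restricted to $i<j$). Your closing remark is not needed for the proof, but two points about it are worth noting: its claim that the shorter factor is always cut short by a larger letter depends on the sentinel $\$$ (without it, $x=bab$ with $i=3$, $j=1$ gives $x_i\prec x_j$), and borders are not essential, since $x=bcba$ with $w_{\lambda^{-1}}(1)=b$ and $w_{\lambda^{-1}}(3)=ba$ already fails the implication using unbordered factors.
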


\begin{proof}
Take $x=\mathit{babacbabaa}$ with $i=1$ and $j=6$. Then $w_{\lambda^{-1}}(1)=\mathit{baba}$ and $w_{\lambda^{-1}}(6)=\mathit{babaa}$, so $w_{\lambda^{-1}}(1)\prec w_{\lambda^{-1}}(6)$, but $x_1=\mathit{babacbabaa}\succ x_6=\mathit{babaa}$. The obstruction is the prefix case: if an inverse Lyndon word has the form $z=z'\alpha$, one may have $z\succ \alpha$ while only $z'\succeq \alpha$, so the continuation of the suffix is not determined by the maximal inverse factors alone.
\end{proof}

Nevertheless, the inverse Lyndon array still provides constant-time comparison rules for relevant suffix pairs.

\begin{proposition}\label{prop:short-circuit}
For positions $i<j$, each of the following conditions determines the order of $x_i$ and $x_j$ in constant time:
\begin{enumerate}
\item if $j < i + \lambda[i]$, then $x_i \prec x_j$;
\item if $j = \mathit{next}_{-1}[i]$, equivalently $j = i + \lambda^{-1}[i] - \mathit{nlce}[i]$, then $x_i \prec x_j$;
\item if $j < \mathit{next}_{-1}[i]$, equivalently $j < i + \lambda^{-1}[i] - \mathit{nlce}[i]$, then $x_i \succ x_j$;
\item if $i = \mathit{prev}[j]$ or $i = \mathit{prev}_{-1}[j]$, the comparison follows immediately from the defining edge.
\end{enumerate}
\end{proposition}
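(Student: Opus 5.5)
I will prove the four items separately. The rule for each comes directly from the definitions of the nearest-suffix arrays, together with Lemma~\ref{lem:equiv-std}, Lemma~\ref{lem:equiv-inv} and Corollary~\ref{cor:recover}. In every case the constant-time claim is immediate once the arrays $\lambda$, $\lambda^{-1}$, $\mathit{next}_{-1}$, $\mathit{nlce}$, $\mathit{prev}$ and $\mathit{prev}_{-1}$ have been precomputed: each test is a single array lookup and an integer comparison, and no character comparisons are needed. The real content is therefore only the correctness of the implied order.

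For item~1, Lemma~\ref{lem:equiv-std} gives $\mathit{next}[i] = i+\lambda[i]$. If $i<j<i+\lambda[i]=\mathit{next}[i]$, then $j$ lies strictly before the first position to the right of $i$ that carries a suffix smaller than $x_i$. Hence $x_j \not\prec x_i$. Since the sentinels make all suffixes distinct, this gives $x_i \prec x_j$.

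Item~3 is the exact dual in the inverse setting. If $i<j<\mathit{next}_{-1}[i]$, then by the minimality in Definition~\ref{def:ngs-pgs} we have $x_j \not\succ x_i$, and distinctness gives $x_i \succ x_j$. Item~2 holds by definition, since $x_{\mathit{next}_{-1}[i]} \succ x_i$. The stated equivalences in items~2 and~3 are just Corollary~\ref{cor:recover} rearranged, $\mathit{next}_{-1}[i] = i+\lambda^{-1}[i]-\mathrm{lce}(i,\mathit{next}_{-1}[i])$, with $\mathit{nlce}[i]$ storing that LCE value as in Algorithm~\ref{alg:lce-ngs}.

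For item~4, if $i=\mathit{prev}[j]$ then $x_i \prec x_j$ by Definition~\ref{def:nss-pss}, and if $i=\mathit{prev}_{-1}[j]$ then $x_i \succ x_j$ by Definition~\ref{def:ngs-pgs}.

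The only delicate point is the boundary conventions. First, $\mathit{next}[1]=\mathit{next}[n]=n+1$ and $\mathit{next}_{-1}[n]=n+1$ are assigned by fiat, so I must check that these values are consistent with the "no such $j$" reading; this holds because the sentinels make $x_1$ extremal. Second, I must check that the strict total order on framed suffixes allows every "not smaller" statement to be converted into "strictly greater". I expect to handle both points with a one-line remark invoking the sentinel discussion in Section~\ref{sec:prelim}.
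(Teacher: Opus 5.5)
Your proof is correct and is essentially the paper's own argument, which consists only of the remark that each item follows directly from the definitions. You spell that out with Lemma~\ref{lem:equiv-std} for item~1, minimality of $\mathit{next}_{-1}$ for items~2--3, Corollary~\ref{cor:recover} for the stated equivalences, and the defining inequalities of $\mathit{prev}$ and $\mathit{prev}_{-1}$ for item~4.

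Your argument already uses each lemma in its own setting; it is worth stating outright that item~1 and the $\mathit{prev}$ case of item~4 refer to the standard framing $\#<\$<a$, while the other cases refer to the inverse framing. At $i=1$ the two framings give opposite orders.
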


\begin{proof}
Each item follows directly from the definitions.
\end{proof}

Combining the present results with other properties of ICFL may support efficient approaches to suffix sorting, a direction that also aligns with the broader formal-language perspective discussed in~\cite{DLT22}.

\begin{credits}
\subsubsection{\discintname}
The authors have no competing interests to declare that are relevant to the content of this article.
\end{credits}

\vspace*{\prebibadjust}

\end{document}